\documentclass[12pt,draft,onecolumn]{IEEEtran}
\usepackage{times,color,amsmath,amssymb,graphicx,epsf,psfrag,cite}



 
\def\boxit#1{\vbox{\hrule\hbox{\vrule\kern3pt
        \vbox{\kern3pt#1\kern3pt}\kern3pt\vrule}\hrule}}
\def\dag{{\cal y}}
\def\reals{ { {\rm  I \kern-0.15em R }  } }
\def\complex{ {\,{{\rm C} \kern-0.50em \raise0.20ex {  |}}\, }}

\def\Rbf{{\bf R}}

\def\Tc{{\cal T}}

%
%


%

\def\Rxx{\Rbf_{\ssstyle X\kern-.1em X}}

\let\ssstyle=\scriptscriptstyle


\def\Kout{\setbox1=\hbox{\Huge\bf K}\hbox to
1.05\wd1{\hspace{.05\wd1}
\def\Sout{\setbox1=\hbox{\Huge\bf S}\hbox to 1.05\wd1{\hspace{.05\wd1}

\def\tcm{\textcolor{blue}}
\def\tcr{\textcolor{blue}}
\def\tcb{\textcolor{blue}}

\def\scalefig#1{\epsfxsize #1\textwidth}
\def\nn{{\nonumber}}

\newtheorem{lemma}{Lemma}
\newtheorem{theorem}{Theorem}

\newtheorem{thm}{Theorem}
\newtheorem{prop}[thm]{Proposition}

\newtheorem{cor}{Corollary}

\newtheorem{ex}{{\em Example}}
\newtheorem{assumption}{Assumption}

\date{May 6, 2008}

\newcommand{\ignore}[1]{}
\bibliographystyle{ieeetr}

\begin{document}

\title{Optimality of Myopic Sensing in\\  Multichannel Opportunistic Access\thanks{Preliminary version of this work was presented at
\em{IEEE International Conference on Communications (ICC)}, May 2008, Beijing, China.} }
\author{
Sahand Haji Ali Ahmad$^\sharp$,
Mingyan Liu$^\sharp$
Tara Javidi$^\dag$,
Qing Zhao$^\ddag$,\\
Bhaskar Krishnamachari$^\S$
\\
\small
shajiali@eecs.umich.edu,
mingyan@eecs.umich.edu,
tara@ece.ucsd.edu,
qzhao@ece.ucdavis.edu,
bkrishna@usc.edu
\\
$^\sharp$Department of Electrical Engineering and Computer Science,
University of Michigan, Ann Arbor, MI 48109\\
$^\dag$Department of Electrical and Computer Engineering, University of California, San Diego, La Jolla, CA 92093\\
$^\ddag$Department of Electrical and Computer Engineering, University of California, Davis, CA 95616\\
$^\S$Ming Hsieh Department of Electrical Engineering, University of
Southern California, Los Angeles, CA 90089
}

\maketitle

\begin{abstract}
We consider opportunistic communication over multiple channels where the state (``good'' or ``bad'') of each channel evolves as independent and identically distributed Markov processes. A user, with limited channel sensing and access capability, chooses one channel to sense and subsequently access (based on the sensed channel state) in each time slot. A reward is obtained whenever the user senses and accesses a ``good'' channel. The objective is to design an optimal channel selection policy that maximizes the expected total (discounted or average) reward accrued over a finite or infinite horizon. This problem can be cast as
a Partially Observable Markov Decision Process (POMDP) or a restless multi-armed bandit
process, to which optimal solutions are often intractable. 
\tcb{
We show in this paper that a myopic policy that maximizes the immediate one-step reward is always optimal when the state transitions are positively correlated over time.  When the state transitions are negatively correlated, we show that the same policy is optimal when the number of channels is limited to 2 or 3, while presenting a counterexample for the case of 4 channels. } 
This result finds applications in opportunistic transmission scheduling in a fading environment, cognitive radio networks for spectrum overlay, and resource-constrained jamming and anti-jamming.
\end{abstract}

\vspace{1em}

\begin{IEEEkeywords}
Opportunistic access, cognitive radio, POMDP, multi-armed bandit, restless
bandit, Gittins index, Whittle's index, myopic policy.
\end{IEEEkeywords}

\newpage

\section{Introduction}\label{sec:intro}


We consider a communication system in which a sender
has access to multiple channels, but is limited to sensing
and transmitting only on one at a given time. We explore how a smart
sender should exploit past observations and the knowledge of the
stochastic state evolution of
these channels to maximize its transmission rate by switching
opportunistically across channels.

We model this problem in the following manner.
As shown in Figure~1,
there are $n$
channels, each of which evolves as an independent,
identically-distributed, two-state discrete-time Markov chain. The
two states for each channel --- ``good" (or state $1$) and ``bad" (or
state $0$) ---
indicate the desirability of transmitting over that channel at a given time slot.
The state transition probabilities are given by $p_{ij}$, $i, j = 0, 1$.
In each time slot the sender picks one of the channels to sense based
on its prior observations, and obtains some fixed reward if it is in
the good state. The basic objective of the sender is to maximize the
reward that it can gain over a given finite time horizon. This
problem can be described as a partially observable Markov decision
process (POMDP)~\cite{Smallwood&Sondik:71OR} since the states
of the underlying Markov chains are not fully observed.
\tcb{It can also be cast as a special case of the class of
restless multi-armed bandit problems \cite{whittle}; more discussion on this 
is given in Section \ref{sec:discussion}. }

\begin{figure}[htb]
\centerline{
\begin{psfrags}
\psfrag{A}[c]{ $0$}
\psfrag{B}[c]{ $1$}
\psfrag{A1}[c]{ (bad)}
\psfrag{B1}[c]{ (good)}
\psfrag{a}[c]{$p_{01}$}
\psfrag{b}[l]{ $p_{11}$}
\psfrag{a1}[r]{$p_{00}$}
\psfrag{b1}[c]{ $p_{10}$}
\scalefig{0.5}\epsfbox{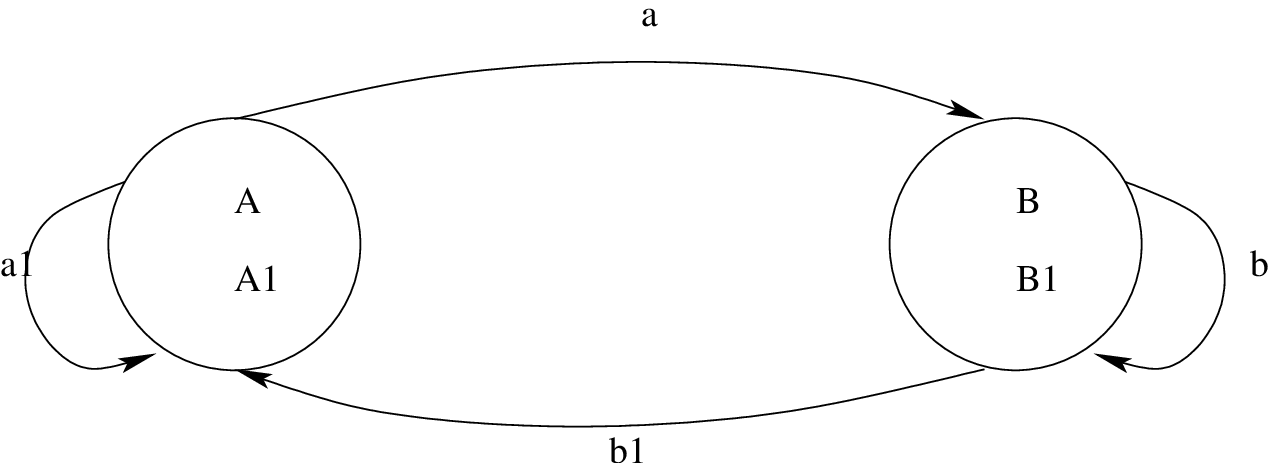}
\end{psfrags}}
\caption{The Markov channel model.} \label{fig:MC}
\end{figure}

This formulation is broadly applicable to several domains. It arises
naturally in opportunistic spectrum access
(OSA)~\cite{Zhao&Sadler:07SPM,Zhao&etal:07JSAC}, where the sender is a
secondary user, and the channel states describe the occupancy by
primary users. In the OSA problem, the secondary sender may send on
a given channel only when there is no primary user occupying it. It
pertains to communication over parallel fading channels as well, if
a two-state Markovian fading model is employed. Another interesting
application of this formulation is in the domain of communication
security, where it can be used to \color{blue} develop \color{black} bounds on the
performance of resource-constrained jamming. A jammer that has
access to only one channel at a time could also use the same
stochastic dynamic decision making process to maximize the number of
times that it can successfully jam communications that occur on
these channels. In this application, the ``good" state for the
jammer is precisely when the channel is being utilized by other
senders (in contrast with the OSA problem).


\tcm{In this paper we examine the optimality of a simple myopic policy
for the opportunistic access problem outlined above. 
}
\tcb{Specifically, we show that the myopic policy is optimal for arbitrary $n$ when
$p_{11}\geq p_{01}$.  We also show that it is optimal for $n=3$
when $p_{11}<p_{01}$, while presenting a finite horizon counter example showing that
it is in general not optimal for $n\geq 4$. 
We also generalize these results to related formulations involving discounted and average rewards over an infinite horizon.}


\tcm{These results extend and complement those reported in prior work~\cite{Zhao&etal:08TWC}. 
Specifically, it has been shown in~\cite{Zhao&etal:08TWC}
that for all $n$ the myopic policy has an elegant and robust
structure that obviates the need to know the channel state
transition probabilities and reduces channel selection to a simple round robin
procedure. Based on this structure, the optimality of the myopic policy for $n=2$ was established and the performance of the myopic policy, in particular, the scaling property with respect to $n$, analyzed in~\cite{Zhao&etal:08TWC}.
}
\tcm{It was conjectured in~\cite{Zhao&etal:08TWC} that the myopic policy
is optimal for any $n$. This conjecture was partially addressed in a preliminary
conference version~\cite{icc-08}}, \tcb{where the optimality was established 
under certain restrictive conditions on the channel parameters and the discount factor.
In the present paper, we significantly relax these conditions and formerly prove this conjecture under the condition $p_{11}\geq p_{01}$.  We also provide a counter example for $p_{11} < p_{01}$.}

\tcb{We would like to emphasize that compared to earlier work \cite{Zhao&etal:08TWC,icc-08},
the approach used in this paper relies on a coupling argument, which is the key to extending the optimality result to the arbitrary $n$ case.
Earlier techniques were largely based on exploiting the \tcr{convex analytic} properties of the value function, and were shown to
have difficulty in overcoming the $n=2$ barrier \tcr{without further conditions on the discount factor or
transition probabilities}.  This observation is somewhat reminiscent of the results reported in \cite{modiano}, where a coupling argument
was also used to solve an $n$-queue problem while earlier versions \cite{Ganti} using value function properties were limited to a $2$-queue case.}
\tcr{We invite the interested reader to refer to \cite{koolNOW}, an important
manuscript on monotonicity in MDPs which explores the power as well as the
limitation of working with analytic properties of value functions and dynamic programming
operators as we had done in our earlier work. In particular, \cite[Section 9.5]{koolNOW} explores the difficulty of using such techniques
for multi-dimensional problems where the number of queues is more than $n=2$; \cite[Chapter 12]{koolNOW} contrasts this proof technique with the
stochastic coupling arguments, which our present work uses.}


The remainder of this paper is organized as follows.  We formulate the problem
in Section~\ref{sec:problem} and illustrate
the myopic policy in Section~\ref{sec:policy}.  In Section~\ref{sec:optimal},
we prove that the myopic policy is optimal
in the case of \color{blue} $p_{11}\geq p_{01}$, \color{black} and show in Section~\ref{sec:counter}
that it is in general not optimal when this condition does not hold.
Section~\ref{sec:infinite} extends the results from finite horizon
to infinite horizon.
We discuss our work within the context of the class of restless bandit
problems as well as some related work in this area in Section~\ref{sec:discussion}.
Section~\ref{sec:conclusion} concludes the paper.




\section{Problem Formulation}\label{sec:problem}

We consider the scenario where a user is trying to access the wireless
spectrum to maximize its throughput or data rate.  The spectrum consists
of $n$ independent and statistically identical channels.
The state of a channel is given by a two-state discrete time Markov
chain shown in Figure \ref{fig:MC}.

The system operates in discrete time steps indexed by $t$, $t=1, 2, \cdots, T$,
where $T$ is the time horizon of interest.  At time $t^{-}$, the channels (i.e.,
the Markov chains representing them) go through state transitions, and at
time $t$ the user makes the channel sensing and access decision.
Specifically, at time $t$ the user selects one of the $n$ channels to sense,
say channel $i$.  If the channel is sensed to be in the ``good'' state (state $1$), the user transmits and collects one unit of reward.  Otherwise
the user does not transmit (or transmits at a lower rate), collects no reward, and waits until $t+1$ to make another choice.
This process repeats sequentially until the time horizon expires.

\color{blue}
As mentioned earlier, this abstraction is primarily motivated by the following multi-channel access scenario where a secondary user seeks spectrum opportunity in between a primary user's activities.  Specifically, time is divided into frames and at the beginning of each frame there is a designated time slot for the primary user to reserve that frame and for secondary users to perform channel sensing.  If a primary user intends to use a frame it will simply remain active in a channel (or multiple channels) during that sensing time slot (i.e., reservation is by default for a primary user in use of the channel), in which case a secondary user will find the channel(s) busy and not attempt to use it for the duration of that frame.  If the primary user is inactive during this sensing time slot, then the remainder of the frame is open to secondary users.  Such a structure provides the necessary protection for the primary user as channel sensing (in particular active channel sensing that involves communication between a pair of users) conducted at arbitrary times can cause undesirable interference.

Within such a structure, a secondary user has a limited amount of time and capability to perform channel sensing, and may only be able to sense
one or a subset of the channels before the sensing time slot ends.  And if all these channels are unavailable then it will have to wait till the
next sensing time slot.  In this paper we will limit our attend to the special case where the secondary user only has the resources to sense
one channel within this slot.  Conceptually our formulation is easily extended to the case where the secondary user can sense multiple channels
at a time within this structure, although the corresponding results differ, see e.g., \cite{Liu&Zhao:08Asilomar}.

Note that in this formulation we do not explicitly model the cost of channel sensing; it is implicit in the fact that the user is limited in how many channels it can sense at a time.  Alternative formulations have been studied where sensing costs are explicitly taken into consideration in a user's sensing and access decision, see e.g., a sequential channel sensing scheme in \cite{chang-mobicom07}.
\color{black}

In this formulation we have assumed that sensing errors are negligible.
Techniques used in this paper may be applicable in proving the optimality of
the myopic policy under imperfect sensing and for a general number of channels. The reason behind this is that our proof exploits the simple structure of the myopic policy, which remains when sensing is subject to errors as shown in~\cite{CrownCom}.

\color{blue}
Note that the system is not fully observable to the user, i.e., the user does not know the exact state of the system when making the sensing decision.  Specifically, channels go through state transition at time $t^{-}$ (or anytime between $(t-1, t)$), thus when the user makes the channel sensing decision at time $t$, it does not have the true state of the system at time $t$, which we denote by  ${\bf s}(t) = [s_1(t), s_2(t), \cdots, s_n(t)] \in \{0, 1\}^n$.  Furthermore, even after its action (at time $t^{+}$) it only gets to observe the true state of one channel, which goes through another transition at or before time $(t+1)^{-}$.
The user's action space at time $t$ is given by the finite set $\{1, 2, \cdots, n\}$, and we will use $a(t) = i$ to denote that the user selects channel $i$ to sense at time $t$.
For clarity, we will denote the outcome/observation of channel sensing at time $t$ following the action $a(t)$ by $h_{a(t)}(t)$, which is essentially the true state $s_{a(t)}(t)$ of channel $a(t)$ at time $t$ since we assume channel sensing to be error-free.

\color{black}
It can be shown (see e.g., \cite{Smallwood&Sondik:71OR,marcus,kumar}) that a sufficient statistic of such a system for optimal decision making, or the {\em information state} of the system \cite{marcus,kumar}, is given by the conditional probabilities of the state each channel is in given all past actions and observations.  Since each channel can be in one of two states, we denote this information state or belief vector by ${\bar\omega}(t)=[\omega_1(t),\cdots,\omega_n(t)] \in [0, 1]^n$, where $\omega_i(t)$ is the conditional probability that channel $i$ is in state $1$ at time $t$ given all past states, actions and observations \footnote{Note that this is a standard way of turning a POMDP problem into a classic MDP (Markov decision process) problem by means of
information state, the main implication being that the state space is
now uncountable.}.
\color{blue} Throughout the paper $\omega_i(t)$ will be referred to as the information state of channel $i$ at time $t$, or simply the channel probability of $i$ at time $t$. \color{black}

\color{blue}
Due to the Markovian nature of the channel model, the future information state is only a function of the current information state and the current action; i.e., it is independent of past history given the current information state and action.
It follows that the information state of the system evolves as follows.  Given that the state at time $t$ is ${\bar\omega}(t)$ and action $a(t)=i$ is taken, $\omega_i(t+1)$ can take on two values: (1) $p_{11}$ if the observation is that channel $i$ is in a ``good'' state ($h_i(t)=1$); this occurs with probability $P\{h_i(t)=1|{\bar\omega}(t)\} = \omega_i(t)$; (2) $p_{01}$ if the observation is that channel $i$ is in a ``bad'' state ($h_i(t)=0$); this occurs with probability $P\{h_i(t)=0 | {\bar\omega}(t)\} = 1-\omega_i$.  For any other channel $j\neq i$, the corresponding $\omega_j(t+1)$ can only take on one value (i.e., with probability $1$): $\omega_j(t+1) = \tau(\omega_j(t))$ where the operator $\tau: [0,1] \rightarrow [0,1]$ is defined as
\begin{equation}
\tau(\omega) := \omega p_{11} + (1-\omega) p_{01},~~~ 0\le\omega\le 1.
\end{equation}

These transition probabilities are summarized in the following equation for $t=1, 2, \cdots, T-1$:
\begin{eqnarray}
\{ \omega_i(t+1) | {\bar\omega}(t), a(t)  \}
= \left\{ \begin{array}{ll}
p_{11} & \mbox{with prob. } \omega_i(t) ~ \mbox{if } a(t)=i \\
p_{01} & \mbox{with prob. } 1-\omega_i(t) ~ \mbox{if } a(t)=i \\
\tau(\omega_i(t)) & \mbox{with prob. } 1 ~ \mbox{if } a(t) \neq i\\
\end{array}
\right.,  ~ i = 1, 2, \cdots, n, \label{eq:omega}
\end{eqnarray}
%
Also note that ${\bar\omega}(1)\in [0,1]^n$ denotes the initial condition (information state) of the system, which may be interpreted as the user's initial belief about how likely each channel is in the good state before sensing starts at time $t=1$.  For the purpose of the optimization problems formulated below, this initial condition is considered given, which can be any probability vector \footnote{\color{blue}That is, the optimal solutions are functions of the initial condition.  A reasonable choice, if the user has no special information other than the transition probabilities of these channels, is to simply use the steady-state probabilities of channels being in state ``1'' as an initial condition (i.e., setting $\omega_i(1)=\frac{p_{10}}{p_{01}+p_{10}}$).\color{black}}.

\tcb{It is important to note that although in general a POMDP problem has an uncountable state space (information states are probability distributions), in our problem the state space is countable for any given initial condition $\bar\omega(1)$.  This is because as shown above, the information state of any channel with an initial probability of $\omega$ can only take on the values $\{\omega, \tau^{k}(\omega), p_{01}, \tau^{k}(\omega), p_{11}, \tau^{k}(\omega)\}$, where $k=1, 2, \cdots$ and $\tau^{k}(\omega):=\tau(\tau^{k-1}(\omega))$, which is a countable set.}

For compactness of presentation we will further use the operator $\Tc$ to denote the above probability distribution of the information state (the entire vector):
\begin{eqnarray}\label{eq:T}
\bar\omega(t+1) = \Tc(\bar\omega(t), a(t)),
\end{eqnarray}
by noting that the operation given in (\ref{eq:omega}) is applied to $\bar{\omega}(t)$ element-by-element.  We will also use the following to denote the information state given observation outcome:
\begin{eqnarray}
\Tc(\bar\omega(t), a(t) | h_{a(t)}(t)=1)
= (\tau(\omega_1(t)), \cdots, \tau(\omega_{a(t)-1}(t)), p_{11},
\tau(\omega_{a(t)+1}(t)), \cdots, \tau(\omega_n(t))) \\
\Tc(\bar\omega(t), a(t) | h_{a(t)}(t)=0)
= (\tau(\omega_1(t)), \cdots, \tau(\omega_{a(t)-1}(t)), p_{01},
\tau(\omega_{a(t)+1}(t)), \cdots, \tau(\omega_n(t)))
\end{eqnarray}

\color{black}

%

The objective of the user is to maximize its total (discounted or average) expected
reward over a finite (or infinite) horizon.  Let $J_T^{\pi}({\bar\omega})$,
$J^{\pi}_{\beta}({\bar\omega})$, and $J^{\pi}_{\infty}({\bar\omega})$ denote, respectively,
these cost criteria (namely, finite horizon, infinite horizon with discount, and
infinite horizon average reward) under policy $\pi$ starting in state ${\bar\omega} = [\omega_1, \cdots, \omega_n]$.
The associated optimization problems ((P1)-(P3)) are formally defined
as follows.
\begin{eqnarray}
\mbox{(P1):} && \hspace{-0.2in} \max_{\pi} J_T^{\pi}({\bar\omega})
= \max_{\pi} E^{\pi} [ \sum_{t=1}^{T} \beta^{t-1} R_{\pi_t}({\bar\omega}(t)) | {\bar\omega}(1) = {\bar\omega}]\nn \\
\mbox{(P2):} && \hspace{-0.2in} \max_{\pi} J^{\pi}_{\beta}({\bar\omega})
= \max_{\pi} E^{\pi} [\sum_{t=1}^{\infty} \beta^{t-1} R_{\pi_t}({\bar\omega}(t)) | {\bar\omega}(1) = {\bar\omega}]\nn\\
\color{blue}
\mbox{(P3):} &&
\color{blue}
\hspace{-0.2in} \max_{\pi} J^{\pi}_{\infty}({\bar\omega})
= \max_{\pi} \lim_{T\rightarrow\infty} \frac{1}{T}E^{\pi} [\sum_{t=1}^{T} R_{\pi_t}({\bar\omega}(t)) | {\bar\omega}(1) = {\bar\omega}]\nn
\color{black}
\end{eqnarray}
where $\beta$ ($0\leq \beta \leq 1$ for (P1) and $0\leq \beta < 1$ for (P2))
is the discount factor, and $R_{\pi_t}({\bar\omega}(t))$ is the reward collected under state ${\bar\omega}(t)$ when channel $a(t)=\pi_t(\bar\omega(t))$ is selected
\color{blue}
and $h_{a(t)}(t)$ is observed.  This reward is given by
$R_{\pi_t}({\bar\omega}(t)) = 1$ with probability $\omega_{a(t)}(t)$ (when $h_{a(t)}(t)=1$), and $0$ otherwise.

The maximization in (P1) is over the class of deterministic Markov
policies.\footnote{\color{blue}A Markov policy is a policy that derives its action only depending on the current (information) state, rather than the entire history of states, see e.g., \cite{kumar}.\color{black}}.  An admissible policy $\pi$, given by the vector $\pi = [\pi_1, \pi_2, \cdots, \pi_T]$, is thus such that $\pi_t$ specifies a mapping from the current information state
$\bar\omega(t)$ to a channel selection action $a(t)=\pi_t(\bar\omega(t)) \in\{1, 2, \cdots, n\}$.
This is done without loss of optimality due to the Markovian nature of the underlying system, and due to known results on POMDPs.  Note that the class of Markov policies in terms
of information state are also known as seperated policies (see \cite{kumar}). Due to
finiteness of (unobservable) state spaces and action space in problem~(P1), it is known that an optimal policy (over all random and deterministic, history-dependent and history-independent policies) may be found within the class of separated (i.e.\ deterministic Markov) policies (see e.g., \cite[Theorem 7.1, Chapter 6]{kumar}), thus justifying the maximization and the admissible policy space.

\tcr{In Section~\ref{sec:infinite} we establish the existence of a stationary separated policy $\pi^*$, under which the supremum of the expected discounted reward as well as the supremum of expected average cost are achieved, hence justifying our use of maximization in (P2) and (P3).
Furthermore, it is shown that under this policy the limit in (P3) exists and is greater than the limsup of the average performance of any other policy (in general history-dependent and randomized).  This is a strong notion of optimality; the interpretation is that the most ``pessimistic" average performance under policy $\pi^*$ ($\lim\inf \frac{1}{T}J_T^{\pi^*}(\cdot)=\lim
\frac{1}{T}J_T^{\pi^*}(\cdot)$) is greater than the most ``optimistic" performance under any
other policy $\pi$ ($\lim\sup \frac{1}{T}J_T^{\pi}(\cdot)$). In much of the literature on MDP, this
is referred to as the \emph{strong optimality} for an expected average cost (reward) problem; for a discussion on this, see \cite[Page 344]{puterman}.
}


\section{Optimal Policy and the Myopic Policy}\label{sec:policy}

\subsection{Dynamic Programming Representations}\label{sec:prelim}

\color{blue}
Problems (P1)-(P3) defined in the previous section may be solved using their respective dynamic programming (DP) representations.  Specifically, for problem (P1), we have the following recursive equations:
\begin{eqnarray}
V_{T}({\bar\omega}) &=& \max_{a=1, 2, \cdots, n} E[R_a({\bar\omega}) ] \nonumber\\
V_{t}({\bar\omega}) &=& \max_{a=1, 2, \cdots, n} E[R_a({\bar\omega}) + \beta V_{t+1}(\Tc({\bar\omega}, a))] \nn \\
&=& \max_{a=1,\cdots, n} ( \omega_a +\beta
\omega_a V_{t+1} \left( \Tc\left( {\bar\omega}, a | 1\right)\right)  
+\beta (1-\omega_a)  V_{t+1} \left( \Tc\left( {\bar\omega}, a | 0\right) \right) )~,
\label{DP-finite-t}
\end{eqnarray}
for $t=1, 2, \cdots, T-1$, where $V_t(\bar\omega)$ is known as the value function,  or the maximum expected future reward that can be accrued starting from time $t$ when the information state is ${\bar\omega}$.  In particular, we have $V_1(\bar\omega) = \max_{\pi} J^{\pi}_T(\bar\omega)$, and an optimal deterministic Markov policy exists such that $a=\pi^*_t (\bar\omega)$ achieves the maximum in (\ref{DP-finite-t}) (see e.g., \cite{puterman} (Chapter 4)).  Note that since $\cal{T}$ is a conditional probability distribution (given in (\ref{eq:T})), $V_{t+1}({\cal T}(\bar\omega, a))$ is taken to be the expectation over this distribution when its argument is ${\cal T}$, with a slight abuse of notation, as expressed in (\ref{DP-finite-t}).


%

Similar dynamic programming representations hold for (P2) and (P3) as given below.
\ignore{
As mentioned earlier, the existence of optimal stationary Markov policies in (P2) and (P3)
is a consequence of i) the finiteness of the action space, ii) the countability of the information state space for given initial condition,  and iii) the bounded condition on the immediate reward (see \cite{puterman} and \cite{marcus}) ({\bf Tara, again please see if anything here needs to be modified to be consistent with the previous section}).}
For problem (P2) there exists a unique function $V_\beta(\cdot)$ satisfying the following fixed point equation:
\begin{eqnarray}
V_{\beta}({\bar\omega}) &=& \max_{a=1, \cdots, n} E[R_a({\bar\omega}) + \beta V_{\beta}(\Tc({\bar\omega}, a))] \nn \\
 &=& \max_{a=1,\cdots, n} ( \omega_a + \beta
\omega_a V_\beta \left( \Tc\left( \bar\omega, a | 1\right)\right) 
+ \beta (1-\omega_a)  V_\beta \left( \Tc\left( \bar\omega, a | 0\right) \right) )~.
\label{DP-discount}
\end{eqnarray}
We have that $V_\beta(\bar\omega) = \max_{\pi} J_{\beta}^{\pi}({\bar\omega})$,  
and that a stationary separated policy $\pi^*$ is optimal
if and only if  $a=\pi^*(\bar\omega)$ achieves the maximum in (\ref{DP-discount})
\cite[Theorem 7.1]{marcus-survey}.

For problem (P3), we will show that there exist a bounded function $h_\infty(\cdot)$ and a constant scalar $J$
satisfying the following equation:
\begin{eqnarray}
J+ h_{\infty}({\bar\omega}) &=& \max_{a=1, 2, \cdots, n} E[R_a({\bar\omega}) +  h_{\infty}(\Tc({\bar\omega}, a))] \nn \\
&=& \max_{a=1,\cdots, n} ( \omega_a +
\omega_a h_{\infty} \left( \Tc\left( \bar\omega, a | 1\right)\right) 
+(1-\omega_a)  h_{\infty} \left( \Tc\left( \bar\omega, a | 0\right) \right) ).\label{DP-avgCost}
\end{eqnarray}
The boundedness of $h_\infty$ and the immediate reward implies
that $J = \max_{\pi} J_{\infty}^{\pi}({\bar\omega})$,
and that a stationary separated policy $\pi^*$ is optimal
in the context of (P3)
if and only if  $a=\pi^*(\bar\omega)$ achieves the maximum in
(\ref{DP-avgCost}) \cite[Theorems 6.1-6.3]{marcus-survey}.


Solving (P1)-(P3) using the above recursive equations is in general computationally heavy.
Therefore, instead of directly using the DP equations, the focus of this paper is on examining the optimality properties of a simple, greedy algorithm.
We define this algorithm next and show its simplicity in structure and implementation. 
\color{black}

\subsection{The Myopic Policy}\label{sec:policy-myopic}

A myopic or greedy policy ignores the impact of the current action on the future reward,
focusing solely on maximizing the expected immediate reward. Myopic policies
are thus stationary.
For (P1), the myopic policy under state ${\bar\omega} = [\omega_1, \omega_2,
\cdots, \omega_n]$ is given by
\begin{equation}
a^*({\bar\omega})=\arg\max_{a=1,\cdots,n} E[R_a({\bar\omega})] =
\arg\max_{a=1, \cdots, n}  \omega_a.
\label{eq:a*}
\end{equation}

In general, obtaining the myopic action in each time slot requires
the successive update of the information state as given in
\eqref{eq:omega}, which explicitly relies on the knowledge of the transition
probabilities $\{p_{ij}\}$ \color{blue}as well as the initial condition $\bar\omega(1)$.
\color{black}
Interestingly, \tcm{it has been shown in~\cite{Zhao&etal:08TWC}} that the implementation of the myopic policy requires only the knowledge of the initial condition and the order of $p_{11}$ and $p_{01}$, but not the precise values of these
transition probabilities.  \tcb{To make the present paper self-contained, below we briefly describe how this policy works; more details may be found in \cite{Zhao&etal:08TWC}.} 

\tcb{Specifically, when $p_{11}\geq p_{01}$
%
%
the conditional probability updating function $\tau(\omega)$ is a monotonically increasing function,  i.e., $\tau(\omega_1) \geq \tau(\omega_2)$
for $\omega_1 \geq \omega_2$.  Therefore the ordering of information states among channels is preserved when they are not observed.
If a channel has been observed to be in state ``1'' (respectively ``0''), its probability at the next step becomes $p_{11}\geq \tau(\omega)$
(respectively $p_{01}\leq \tau(\omega)$) for any $\omega\in[0, 1]$.  In other words, a channel observed to be in state ``1'' (respectively ``0'')
will have the highest (respectively lowest) possible information state among all channels.}
%



\tcb{These observations lead to the following implementation of the myopic policy.
We take the initial information state $\bar\omega(1)$, order the channels according to their probabilities $\omega_i(1)$,
and probe the highest one (top of the ordered list) with ties broken randomly.  In subsequent steps we stay in the same channel
if the channel was sensed to be in state ``1'' (good) in the previous slot; otherwise, this channel is moved to the bottom of the ordered list,
and we probe the channel currently at the top of the list.  This in effect creates a round robin style of probing, where the channels are
cycled through in a fixed order.
This circular structure is exploited in Section~\ref{sec:optimal} to prove the optimality of the myopic policy in the case of $p_{11}\geq p_{01}$.}

\tcb{When $p_{11}< p_{01}$, we have an analogous but opposite situation.
The conditional probability updating function $\tau(\omega)$ is now a monotonically decreasing function,  i.e., $\tau(\omega_1) \leq \tau(\omega_2)$
for $\omega_1 \geq \omega_2$.  Therefore the ordering of information states among channels is reversed at each time step when they are not observed.
If a channel has been observed to be in state ``1'' (respectively ``0''), its probability at the next step becomes $p_{11}\leq \tau(\omega)$
(respectively $p_{01}\geq \tau(\omega)$) for any $\omega\in[0, 1]$.  In other words, a channel observed to be in state ``1'' (respectively ``0'')
will have the lowest (respectively highest) possible information state among all channels.}

\tcb{As in the previous case, these similar observations lead to the following implementation. 
We take the initial information state $\bar\omega(1)$, order the channels according to their probabilities $\omega_i(1)$, and probe the highest
one (top of the ordered list) with ties broken randomly.  In each subsequent step, if the channel sensed in the previous step was in
state ``0'' (bad), we keep this channel at the top of the list but completely reverse the order of the remaining list, and we probe this channel.
If the channel sensed in the previous step was in state ``1'' (good), then we completely reverse the order of the entire list (including dropping
this channel to the bottom of the list), and probe the channel currently at the top of the list.  This alternating circular structure is exploited in Section~\ref{sec:counter} to examine the optimality of the myopic policy in the case of $p_{11}<p_{01}$.}

\section{Optimality of the Myopic Policy in the Case of $p_{11}\geq p_{01}$}
\label{sec:optimal}

In this section we show that the myopic policy, with a simple and
robust structure, is optimal when $p_{11} \geq p_{01}$.
We will first show this for the finite horizon discounted cost case,
and then extend the result to the infinite horizon case under both discounted
and average cost criteria in Section~\ref{sec:infinite}.

The main assumption is formally stated as follows.
\begin{assumption} \label{assumption}
The transition probabilities $p_{01}$ and $p_{11}$ are such that
\begin{eqnarray}\label{p11>p01}
p_{11} - p_{01} & \geq & 0.
\end{eqnarray}
\end{assumption}



The main theorem of this section is as follows.
\begin{theorem} \label{mainThm}
Consider Problem (P1).
\color{blue}
Define $V_t(\bar\omega; a):= E[R_a(\bar\omega) + \beta V_{t+1}({\cal T}(\bar\omega, a))]$, i.e., the value of the value function given in Eqn (\ref{DP-finite-t}) when action $a$ is taken at time $t$ followed by an optimal policy.
\color{black}
Under Assumption~\ref{assumption},
the myopic policy is optimal, i.e.  for $\forall t, 1 \leq t < T$,
and $\forall \bar\omega = [\omega_1, \cdots, \omega_n] \in [0, 1]^{n}$,
\begin{equation} \label{mainIneq}
\color{blue}
V_t(\bar\omega; a = j ) -
V_t(\bar\omega; a = i ) \geq 0,
\end{equation}
if $\omega_j \geq \omega_i$, \color{black}
for $i=1,\cdots,n$.
\end{theorem}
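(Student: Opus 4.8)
The plan is to prove the stronger statement that the myopic policy is optimal by establishing, via backward induction on $t$ starting from $t=T$, that the value function $V_t(\bar\omega)$ has enough structure to guarantee that probing the channel with the highest belief is optimal. The base case $t=T$ is immediate from \eqref{eq:a*}, since $V_T(\bar\omega)=\max_a\omega_a$. For the inductive step I would fix $t$, assume the claim holds for $t+1$ (so the myopic policy is optimal from $t+1$ onward, hence $V_{t+1}$ is the myopic value function, which by the round-robin structure described in Section~\ref{sec:policy-myopic} has an explicit form), and then show \eqref{mainIneq}: swapping the action from a lower-belief channel $i$ to a higher-belief channel $j$ does not decrease the total expected reward.

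The key device is a \emph{coupling / exchange argument} rather than an appeal to convexity of $V_t$. Concretely, I would compare two systems that start in the same state $\bar\omega$: one follows the policy ``probe $j$ at time $t$, then act myopically'' and the other follows ``probe $i$ at time $t$, then act myopically.'' Because the channels are statistically identical two-state Markov chains, I would couple the underlying state realizations of channels $i$ and $j$ so that their trajectories are swapped between the two systems, while all other channels evolve identically. Under Assumption~\ref{assumption}, $\tau$ is monotone increasing, so the ordering of the unobserved beliefs is preserved over time, and an observed channel jumps to either $p_{11}$ (the largest possible belief) or $p_{01}$ (the smallest); this is exactly the structure that makes the myopic policy a fixed round-robin order. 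I would use this to argue that, after the first slot, the two coupled systems occupy states that differ only by a permutation of channel indices that a myopic player is indifferent to (the myopic action depends only on the \emph{multiset} of beliefs and the position of the most-recently-observed channel). The immediate-reward comparison in slot $t$ is $\omega_j-\omega_i\ge 0$ by hypothesis, and the continuation values are then shown to be equal (or the $j$-system's to dominate) term by term along the coupled sample paths.

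The main obstacle — and the step I would spend the most care on — is making the coupling genuinely valid across \emph{all} future slots, not just one. The subtlety is that once channel $i$ (resp.\ $j$) is observed, its belief is reset to $p_{11}$ or $p_{01}$ \emph{independently of its coupled partner's reset value}, since the reset depends on the observed state, which under the coupling is the same; so that part is fine. The harder issue is that the myopic policy's future trajectory of \emph{which} channels get probed can differ between the two systems at intermediate times, so one cannot naively say ``the same actions are taken.'' I expect to handle this by showing that the difference between the two systems is always confined to a transposition of two channels whose beliefs, at every future decision epoch, are \emph{both} not the unique maximizer, or are tied — i.e., the discrepancy never ``propagates'' to affect the reward stream — using the monotonicity of $\tau$ and the fact that $p_{01}\le\tau(\omega)\le p_{11}$ for all $\omega$. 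Formalizing this invariant, and checking it is preserved under the myopic update in both the ``observe good'' and ``observe bad'' branches, is the technical heart of the argument; once it is in place, \eqref{mainIneq} follows by taking expectations over the coupled probability space and using $\omega_j\ge\omega_i$ for the slot-$t$ term together with equality of all later terms.
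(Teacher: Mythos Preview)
Your proposal has a genuine gap at precisely the step you flag as the ``technical heart.'' After probing $j$ versus probing $i$ in slot $t$, the two systems do \emph{not} differ merely by a relabelling of channels: the channel that was \emph{not} probed evolves via $\tau$, so the $j$-system carries the number $\tau(\omega_i)$ where the $i$-system carries $\tau(\omega_j)$, and (since $\tau$ is strictly increasing under Assumption~\ref{assumption}) these are different values. The belief multisets at time $t+1$ are therefore genuinely distinct, and the myopic continuation rewards are \emph{not} equal. Your proposed invariant --- that the discrepancy stays confined to two channels neither of which is ever the current maximizer --- already fails at the first step: with $n=3$ and $\omega_1<\omega_2<\omega_3$, compare probing channel $3$ versus channel $2$ and condition on observing ``good''; the next-step belief vectors are $(\tau(\omega_1),\tau(\omega_2),p_{11})$ and $(\tau(\omega_1),p_{11},\tau(\omega_3))$, which are not permutations of one another, and the differing entry $\tau(\omega_2)$ vs.\ $\tau(\omega_3)$ will eventually be probed. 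So ``equality of all later terms'' is simply false; there is a non-zero difference in the continuation values that must be controlled, not eliminated.

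The paper handles this differently. It writes the myopic reward as an explicit function $W_t(\bar\omega)$ of the increasingly ordered belief vector, notes that $W_t$ is affine in each coordinate, and reduces the comparison in Lemma~\ref{lem:nes-suf} to a chain of adjacent swaps. The interior swap (Lemma~\ref{lem:exchange}) is routine by induction; the border swap (Lemma~\ref{lem:exchange-border}) is the hard case, and after exploiting affineness it reduces to the bound $1+\beta W_{t+1}(\text{cyclically shifted order})\geq \beta W_{t+1}(\text{original order})$. That is Lemma~\ref{lem:bound}, proved by a sample-path coupling of two \emph{round-robin} schedules starting at adjacent positions on the channel cycle. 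The essential feature is that this coupling yields an \emph{inequality with slack $1$}, not an equality: the continuation values really differ, but by at most one unit of reward, and that unit is exactly supplied by the immediate-reward advantage in slot $t$. An equality-based coupling of the kind you sketch cannot substitute for this step.
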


The proof of this theorem is based on backward induction on $t$: 
given the optimality of the myopic policy at times $t+1, t+2, \cdots, T$, we want to show that it is also optimal at time $t$.  This relies on a number of lemmas introduced below.
The first lemma introduces a notation that allows us to express the expected future reward under the myopic policy.

\begin{lemma}\label{lem:T}
There exist $T$ $n$-variable functions, denoted by $W_t()$, $t=1, 2, \cdots, T$, each of which is a polynomial of order 1\footnote{Each function $W_t$ is \color{blue} affine in each variable, when all other variables are held constant. \color{black}} and can be represented recursively in the following form:
\begin{eqnarray}\label{eq:W}
W_t(\bar \omega) = \omega_n + \omega_n \beta W_{t+1}(\tau(\omega_1),\hdots , \tau(\omega_{n-1}) , p_{11})
+(1-\omega_n) \beta W_{t+1}(p_{01},\tau(\omega_1),\hdots , \tau(\omega_{n-1}) ),
\end{eqnarray}
where $\bar\omega=[\omega_1, \omega_2, \cdots, \omega_n]$ and
$W_T(\bar\omega) = \omega_n$.

\end{lemma}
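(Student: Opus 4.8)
The plan is simply to \emph{define} the functions $W_t$ by the recursion in \eqref{eq:W}, working backward from $t=T$: set $W_T(\bar\omega):=\omega_n$ and, for $t=T-1,T-2,\ldots,1$, let $W_t$ be given by the right-hand side of \eqref{eq:W} with the already-constructed $W_{t+1}$. Under this construction the existence of the $W_t$ and the fact that they obey the stated recursion are automatic (and one checks that \eqref{eq:W} is precisely the expected reward-to-go of the myopic policy when the channels are listed in decreasing order of their beliefs, so that the last coordinate is always the one probed, the channel moving to the head or the tail of the list according to the observation). The only substantive claim is the regularity statement --- that each $W_t$ is affine in each of its $n$ arguments when the remaining $n-1$ are held fixed --- and this is what I would prove, by backward induction on $t$.

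The base case is immediate: $W_T(\bar\omega)=\omega_n$ is affine (indeed linear) in each coordinate. For the inductive step, assume $W_{t+1}$ is affine in each of its variables. The crucial observation is that $\tau$ is an affine function of one real variable, and that composing such a map into a slot in which a function is affine preserves affineness in that slot. Moreover, in each of the two recursive calls appearing in \eqref{eq:W}, the variable $\omega_n$ does not occur at all, while each of $\omega_1,\ldots,\omega_{n-1}$ occurs in exactly one argument slot (slot $i$ in the first call, slot $i+1$ in the second). Hence $A(\omega_1,\ldots,\omega_{n-1}):=W_{t+1}(\tau(\omega_1),\ldots,\tau(\omega_{n-1}),p_{11})$ and $B(\omega_1,\ldots,\omega_{n-1}):=W_{t+1}(p_{01},\tau(\omega_1),\ldots,\tau(\omega_{n-1}))$ are both independent of $\omega_n$ and affine in each of $\omega_1,\ldots,\omega_{n-1}$.

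It then remains to verify that $W_t(\bar\omega)=\omega_n+\beta\omega_n A+\beta(1-\omega_n)B$ is affine in each variable. For $\omega_n$: since $A$ and $B$ do not involve $\omega_n$, we may write $W_t=\beta B+\omega_n\bigl(1+\beta A-\beta B\bigr)$, which is affine in $\omega_n$. For $\omega_i$ with $i<n$: holding every other variable fixed, $\beta\omega_n$ and $\beta(1-\omega_n)$ are constants, $A$ and $B$ are affine in $\omega_i$, and $\omega_n$ is constant in $\omega_i$, so $W_t$ is the sum of a constant and a fixed linear combination of affine functions of $\omega_i$, hence affine in $\omega_i$. This closes the induction.

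There is no real obstacle here; the lemma is essentially a definition together with a routine regularity check. The one point that needs care is the bookkeeping inside the recursive calls --- specifically, confirming that after the $\tau$-shift and the insertion of the constant $p_{11}$ or $p_{01}$ the variable $\omega_n$ has disappeared from the arguments of $W_{t+1}$, and that no $\omega_i$ with $i<n$ appears in two different slots (which would break the affineness argument, since a multi-affine function need not be affine along a diagonal). If one wanted an independent check of the ``order $1$'' claim, one could instead expand $W_t$ explicitly as a sum over the $2^{T-t}$ observation sequences of the probed channel and verify each term is multi-affine in $\bar\omega$, but the induction above is the cleaner route.
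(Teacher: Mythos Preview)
Your proof is correct and takes the same approach as the paper: backward induction on $t$, using that $W_T$ is trivially affine and that $\tau$ is affine, so the recursion preserves coordinatewise affineness. Your write-up is considerably more detailed than the paper's one-line justification; the only slip is the parenthetical remark that the channels are listed in \emph{decreasing} order of belief --- in fact they are in increasing order so that $\omega_n$ is the largest and hence the one probed --- but this does not affect the argument.
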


\begin{proof}
The proof is easily obtained using backward induction on $t$ given the above recursive equation and noting
that $W_T()$ is one such polynomial and the mapping $\tau()$ is a linear operation.
\end{proof}

\begin{cor}\label{col:W}
\tcb{When $\bar\omega$ represents the ordered list of information states $[\omega_1, \omega_2, \cdots, \omega_n]$
with $\omega_1\le\omega_2\le \cdots\le\omega_n$, then $W_t(\bar\omega)$ is the expected total reward obtained by
the myopic policy from time $t$ on.}
\end{cor}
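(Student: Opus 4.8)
The plan is to prove Corollary~\ref{col:W} directly from the recursive definition of $W_t$ in Lemma~\ref{lem:T}, together with the structural description of the myopic policy given at the end of Section~\ref{sec:policy-myopic}. Recall that when $p_{11}\ge p_{01}$ the myopic policy has a circular (round-robin) structure: the channels are kept in a fixed cyclic order with the probed channel at the top of the list; if it is sensed ``good'' we stay on it, and if it is sensed ``bad'' it drops to the bottom of the list (its belief becoming $p_{01}$, the smallest possible value) while all unobserved channels have their beliefs advanced by the monotone map $\tau$, which preserves their order. The key point is that the vector $[\tau(\omega_1),\dots,\tau(\omega_{n-1}),p_{11}]$ is exactly the new ordered belief vector when channel $n$ (the top of the list, i.e. the myopic choice) is sensed ``good,'' and $[p_{01},\tau(\omega_1),\dots,\tau(\omega_{n-1})]$ is the new ordered belief vector when it is sensed ``bad''; in both cases the recursion feeds $W_{t+1}$ the correctly reordered list, so $W_{t+1}$ is again the myopic expected reward from time $t+1$ on.

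I would carry this out by backward induction on $t$. For the base case $t=T$, the myopic policy at the last slot senses the channel with the largest belief, namely $\omega_n$, and collects expected reward $\omega_n = W_T(\bar\omega)$. For the inductive step, assume $W_{t+1}(\bar\omega')$ equals the myopic expected reward from time $t+1$ whenever $\bar\omega'$ is written as an ordered (nondecreasing) list. Under the myopic policy at time $t$ with ordered state $[\omega_1,\dots,\omega_n]$, we probe channel $n$; with probability $\omega_n$ we observe ``good,'' collect reward $1$, and the information state at $t+1$, once reordered into nondecreasing form, is $(\tau(\omega_1),\dots,\tau(\omega_{n-1}),p_{11})$ — here we use monotonicity of $\tau$ (Assumption~\ref{assumption}) to know $\tau(\omega_1)\le\dots\le\tau(\omega_{n-1})$, and the fact that $p_{11}\ge\tau(\omega)$ for all $\omega$ to place $p_{11}$ at the top. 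With probability $1-\omega_n$ we observe ``bad,'' collect $0$, channel $n$'s belief becomes $p_{01}\le\tau(\omega)$ for all $\omega$, so the reordered state is $(p_{01},\tau(\omega_1),\dots,\tau(\omega_{n-1}))$. Discounting the future by $\beta$ and invoking the induction hypothesis on these two ordered vectors gives exactly the right-hand side of \eqref{eq:W}, completing the step.

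The main obstacle — really the only subtle point — is making the ``reordering'' argument airtight: one must verify that the list presented to $W_{t+1}$ in \eqref{eq:W} is genuinely the nondecreasing reordering of the post-transition beliefs, which requires both that $\tau$ is nondecreasing (so the relative order of the unobserved channels is preserved) and that $p_{11}$ (resp.\ $p_{01}$) is an upper (resp.\ lower) bound on $\tau(\omega)$ over $\omega\in[0,1]$, so the observed channel lands at the correct end of the list. Both facts are established in Section~\ref{sec:policy-myopic} under Assumption~\ref{assumption}: $\tau$ monotone increasing, $p_{01}\le\tau(\omega)\le p_{11}$. One should also note that $W_t$ being the myopic reward does not depend on which channel is being probed at time $t$ actually attaining the max in the DP — it is simply the definition of evaluating the round-robin myopic policy — so no appeal to Theorem~\ref{mainThm} is needed here; indeed this corollary is a stepping stone toward that theorem. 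A remark worth including is that ties (equal beliefs) are broken consistently with the cyclic order, which does not affect the value since swapping equal entries leaves $W_t$ unchanged by the affine-in-each-variable structure from Lemma~\ref{lem:T}.
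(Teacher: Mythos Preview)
Your proposal is correct and takes essentially the same approach as the paper: the paper simply asserts that the corollary follows directly from the round-robin description of the myopic policy in Section~\ref{sec:policy-myopic}, and your backward induction spelling out why the two branches of~\eqref{eq:W} feed $W_{t+1}$ the correctly reordered (nondecreasing) belief vector is exactly the argument underlying that one-line justification. Your careful verification that $\tau$ is monotone and that $p_{01}\le\tau(\omega)\le p_{11}$ so the observed channel lands at the correct end of the list is precisely the content the paper leaves implicit.
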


\tcb{This result follows directly from the description of the policy given in Section \ref{sec:policy-myopic}. }

\ignore{
\begin{proof}
\tcm{The proof follows directly from the structure of the myopic policy established in \cite{Zhao&etal:08TWC}. Below we reproduce the proof given in \cite{Zhao&etal:08TWC} for completeness.}

\tcb{When $\bar\omega$ is the ordered list of information states, the recursive expression in (\ref{eq:W}) gives the expected reward of the following policy: probe the $n$-th channel; for the next step, if the current sensing outcome is ``1'', then continue to probe the $n$-th channel; if the current sensing outcome is ``0'', then drop this channel to the bottom of the list (it becomes the first channel) while moving the $i$-th channel to the $(i+1)$-th position for all $i=1, \cdots, n-1$; repeat this process.  (This is essentially the same description as given in Section \ref{sec:policy} for the case of  $p_{11} \ge p_{01}$.) To see that this is the myopic policy, note that under the above policy, at any time the list of channel probabilities are increasingly ordered.  This is because for any $0 \leq \omega \leq 1$, we have $p_{01} \leq \tau(\omega) \leq p_{11}$ when $p_{11} \geq p_{01}$.  Furthermore, under the assumption $p_{11} \geq p_{01}$, $\tau(\omega)$ is a monotonically increasing function.  Therefore under this policy, when starting out with increasingly ordered information states, this ordered is maintained in each subsequent time step.  As expressed in (\ref{eq:W}), at each step it's always the $n$-th channel that is probed.  Since the $n$-th channel has the largest probability of being available, this is the myopic policy.}
\end{proof}
} 

%
%

\begin{prop}
The fact that $W_t$ is a polynomial of order 1 and affine in
each of its elements implies that
\begin{eqnarray}
&&W_t(\omega_1, \cdots, \omega_{n-2}, y, x) -
W_t(\omega_1,\cdots,\omega_{n-2}, x, y)\nn\\
& =&
(x-y)[W_t(\omega_1,\cdots,\omega_{n-2}, 0, 1)-
W_t(\omega_{1},\cdots,\omega_{n-2}, 1, 0)] ~.
\end{eqnarray}
Similar results hold when we change the positions of $x$ and $y$.
\end{prop}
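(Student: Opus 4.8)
The plan is to exploit the affine (multilinear) structure of $W_t$ directly. Since $W_t$ is a polynomial of order $1$ that is affine in each variable separately, we can write, for fixed values of $\omega_1,\dots,\omega_{n-2}$, the restriction of $W_t$ to its last two arguments as a bilinear function: there exist constants $c_0, c_1, c_2, c_3$ (depending on $\omega_1,\dots,\omega_{n-2}$) such that
\begin{equation}
W_t(\omega_1,\dots,\omega_{n-2}, u, v) = c_0 + c_1 u + c_2 v + c_3 uv . \nonumber
\end{equation}
First I would justify this normal form: affineness in the last coordinate gives $W_t(\cdots,u,v) = A(u) + B(u) v$ with $A,B$ functions of $u$ (and the frozen variables); affineness in the second-to-last coordinate forces $A$ and $B$ to each be affine in $u$, i.e. $A(u) = c_0 + c_1 u$ and $B(u) = c_2 + c_3 u$, yielding the displayed bilinear form.

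Next I would simply compute both sides of the claimed identity using this normal form. For the left-hand side,
\begin{equation}
W_t(\cdots, y, x) - W_t(\cdots, x, y) = (c_1 y + c_2 x + c_3 xy) - (c_1 x + c_2 y + c_3 xy) = (c_1 - c_2)(y - x). \nonumber
\end{equation}
For the bracket on the right-hand side, evaluate at $(u,v) = (0,1)$ and $(u,v) = (1,0)$:
\begin{equation}
W_t(\cdots, 0, 1) - W_t(\cdots, 1, 0) = (c_0 + c_2) - (c_0 + c_1) = c_2 - c_1. \nonumber
\end{equation}
Hence $(x - y)\bigl[W_t(\cdots,0,1) - W_t(\cdots,1,0)\bigr] = (x-y)(c_2 - c_1) = (c_1 - c_2)(y - x)$, which matches the left-hand side. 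The statement about changing the positions of $x$ and $y$ (i.e. placing them in coordinates other than the last two) follows by the identical argument applied to whichever pair of coordinates is involved, since affineness holds in every variable.

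There is essentially no obstacle here — the result is a routine consequence of multilinearity, and the only mild care needed is in setting up the bilinear normal form cleanly and keeping track of signs. I would present it compactly: state the normal form with one line of justification, then display the two short computations above and observe that they agree. If one wants to avoid introducing the constants $c_i$ explicitly, an alternative is to note that the map $(x,y) \mapsto W_t(\cdots,y,x) - W_t(\cdots,x,y)$ is bilinear and antisymmetric in $(x,y)$, hence a scalar multiple of $x - y$ (or of $y-x$); evaluating at the corner $(x,y)=(1,0)$ identifies the scalar as $W_t(\cdots,0,1) - W_t(\cdots,1,0)$, giving the formula. Either route is short; I would favor the explicit $c_i$ computation since it is the most transparent and leaves no room for doubt about the sign convention used in the proposition.
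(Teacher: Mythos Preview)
Your proof is correct and follows essentially the same approach as the paper: the paper's argument also writes $W_t$ as having a constant term, an $x$ term, a $y$ term, and an $xy$ term, then observes that under the swap the constant and $xy$ terms cancel so only the linear coefficients matter. Your version is simply a more explicit rendering of that same observation, with the coefficients $c_0,c_1,c_2,c_3$ named and both sides computed directly.
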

To see this, consider $W_t(\omega_{1},\cdots,\omega_{n-2}, x, y)$ and
$W_t(\omega_{1},\cdots,\omega_{n-2}, y, x)$, as functions of $x$ and $y$,
each having an $x$ term, a $y$ term, an $xy$ term and a constant term.
Since we are just swapping the positions of $x$ and $y$ in
these two functions, the constant term remains the same,
and so does the $xy$ term. Thus the only difference is
the $x$ term and the $y$ term, as given in the above equation.
This linearity result will be used later in our proofs.

The next lemma establishes a necessary and sufficient condition
for the optimality of the myopic policy.
\begin{lemma}\label{lem:nes-suf}
Consider Problem (P1) and Assumption 1.
Given the optimality of \color{blue} the \color{black}
myopic policy at times  $t+1, t+2, \cdots, T$,
the optimality at time $t$ is equivalent to:
\begin{eqnarray}
W_t(\omega_1 , \hdots , \omega_{i-1} , \omega_{i+1} , \hdots , \omega_n , \omega_i) \leq  W_t(\omega_1, \hdots , \omega_n), \hspace*{.4in}
\mbox{ for all }  \omega_1  \leq \cdots \leq  \omega_i \leq \cdots \leq \omega_n. \nn
\end{eqnarray}
\end{lemma}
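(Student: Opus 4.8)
The plan is to reduce the equivalence to a bookkeeping identity: under the induction hypothesis, the ``probe channel $i$ at time $t$, then continue optimally'' value $V_t(\bar\omega;a=i)$ coincides with $W_t$ evaluated at a specific reordering of the belief vector, after which the asserted inequality is merely the statement that probing a largest-belief channel is the best first move. First I would normalize to sorted belief vectors. Since the $n$ channels are statistically identical, $V_t(\cdot)$ is invariant under permutations of the components of $\bar\omega$, and by \eqref{eq:a*} the myopic rule always probes a channel of maximal belief; hence it suffices to argue for $\bar\omega=(\omega_1,\dots,\omega_n)$ with $\omega_1\le\cdots\le\omega_n$, where the myopic action is $a=n$ and ``optimality of the myopic policy at time $t$'' means $V_t(\bar\omega;a=n)\ge V_t(\bar\omega;a=i)$ for every $i=1,\dots,n$.

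The core step is to show, for each fixed $i$ and each sorted $\bar\omega$, that
\[
V_t(\bar\omega;a=i)=W_t\big(\omega_1,\dots,\omega_{i-1},\omega_{i+1},\dots,\omega_n,\omega_i\big),
\]
which, specialized to $i=n$, also gives $V_t(\bar\omega;a=n)=W_t(\omega_1,\dots,\omega_n)$. I would start from the one-step expansion \eqref{DP-finite-t}, namely $V_t(\bar\omega;a=i)=\omega_i+\beta\omega_i\,V_{t+1}(\mathcal{T}(\bar\omega,i\mid1))+\beta(1-\omega_i)\,V_{t+1}(\mathcal{T}(\bar\omega,i\mid0))$, and then invoke Assumption~\ref{assumption}: since $\tau$ is nondecreasing and $p_{01}\le\tau(\omega)\le p_{11}$ for all $\omega$, removing $\omega_i$ from the sorted list and applying $\tau$ componentwise shows that the sorted rearrangements of $\mathcal{T}(\bar\omega,i\mid1)$ and $\mathcal{T}(\bar\omega,i\mid0)$ are $(\tau(\omega_1),\dots,\tau(\omega_{i-1}),\tau(\omega_{i+1}),\dots,\tau(\omega_n),p_{11})$ and $(p_{01},\tau(\omega_1),\dots,\tau(\omega_{i-1}),\tau(\omega_{i+1}),\dots,\tau(\omega_n))$. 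These are exactly the arguments of $W_{t+1}$ generated by the recursion \eqref{eq:W} when its input list is $(\omega_1,\dots,\omega_{i-1},\omega_{i+1},\dots,\omega_n,\omega_i)$. Now, by the induction hypothesis the myopic policy is optimal at times $t+1,\dots,T$, so Corollary~\ref{col:W}, together with the permutation-invariance of $V_{t+1}$, lets me replace each $V_{t+1}(\cdot)$ above by $W_{t+1}$ evaluated at the corresponding sorted list; substituting these two equalities into the one-step expansion reproduces verbatim the defining recursion \eqref{eq:W} for $W_t$ at the permuted list, which is the claimed identity.

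With the identity in hand, the equivalence is immediate: ``the myopic policy is optimal at time $t$'' holds exactly when $V_t(\bar\omega;a=n)\ge V_t(\bar\omega;a=i)$ for all $i$ and all sorted $\bar\omega$, which by the identity is exactly the condition $W_t(\omega_1,\dots,\omega_n)\ge W_t(\omega_1,\dots,\omega_{i-1},\omega_{i+1},\dots,\omega_n,\omega_i)$ for all $i$ and all $\omega_1\le\cdots\le\omega_n$ asserted in the lemma. I expect the only real obstacle to be the bookkeeping in the core step: one must check carefully that the posterior belief vector is again ordered, and that the component landing in the ``probed'' (last) slot of $W_{t+1}$ is exactly the one the round-robin structure of the myopic policy would probe next. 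This is precisely where $p_{11}\ge p_{01}$ is needed, since it supplies both the monotonicity of $\tau$ and the inclusion $p_{01}\le\tau(\cdot)\le p_{11}$, and it is what makes $W_t$ applied to a list with one entry moved to the end coincide with ``probe that channel, then act myopically.'' Ties among the $\omega_k$ require only a brief remark: since $V_{t+1}$ is symmetric, the choice among equal-belief channels does not affect any of the values above.
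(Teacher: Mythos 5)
Your proposal is correct and follows essentially the same route as the paper: the paper's (much terser) proof likewise uses the induction hypothesis together with Corollary~\ref{col:W} to identify the RHS with ``probe the largest-belief channel then play myopically'' and the LHS with ``probe channel $i$ then play myopically,'' so the inequality is exactly the statement of myopic optimality at time $t$. Your version simply spells out the bookkeeping (sortedness of the posterior under $p_{11}\ge p_{01}$, permutation invariance, tie-breaking) that the paper leaves implicit.
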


\begin{proof}
Since the myopic policy is optimal from $t+1$ on, it is sufficient to show that probing $\omega_n$ followed by myopic probing is better than probing any other channel followed by myopic probing.  The former is precisely given by the RHS of the above equation; the latter by the LHS, thus completing the proof.
\end{proof}

Having established that $W_t(\bar\omega)$ is the total expected reward of the myopic policy
for an increasingly-ordered vector $\bar\omega=[\omega_1, \cdots, \omega_n]$, we next proceed to show that we do not decrease
this total expected reward $W_t(\bar\omega)$ by switching the order of two neighboring elements $\omega_i$ and $\omega_{i+1}$
if $\omega_i\geq \omega_{i+1}$.  This is done in two separate cases, when $i+1<n$ (given in Lemma \ref{lem:exchange}) and
when $i+1=n$ (given in Lemma \ref{lem:exchange-border}), respectively.  The first case is quite straightforward, while proving
the second cased turned out to be significantly more difficult.
Our proof of the second case (Lemma \ref{lem:exchange-border}) relies on a separate lemma (Lemma \ref{lem:bound}) that establishes a bound \color{blue} between the greedy use of two identical vectors but with a different starting position.\color{black}
The proof of Lemma \ref{lem:bound} is based on a coupling argument and is quite instructive.
%
Below we \color{blue} present and prove Lemmas \ref{lem:bound}, \ref{lem:exchange} and \ref{lem:exchange-border}. \color{black}

\begin{lemma}\label{lem:bound}
For $0 < \omega_1 \leq \omega_2 \leq \hdots \leq \omega_n < 1$ , we have the following inequality for all $t=1, 2, \cdots, T$:
\begin{eqnarray}
1+W_t(\omega_2,\hdots,\omega_n,\omega_1)  \geq W_t(\omega_1,\hdots,\omega_n).
\end{eqnarray}
\end{lemma}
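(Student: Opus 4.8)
The plan is to prove the bound by a stochastic coupling argument, using Corollary~\ref{col:W} to read both sides as rewards of one and the same round-robin policy. Recall from Section~\ref{sec:policy-myopic} and Corollary~\ref{col:W} that, for \emph{any} vector $\bar\omega$ (ordered or not), $W_t(\bar\omega)$ is the expected $\beta$-discounted reward, collected from time $t$ through $T$, of the circular policy: probe the channel in the last coordinate, keep probing it while it is sensed ``good'', and rotate it to the front as soon as it is sensed ``bad''. Hence $W_t(\omega_1,\dots,\omega_n)$ is the reward of this policy run on $n$ channels with current beliefs $\omega_1\le\dots\le\omega_n$, the pointer starting on the channel of belief $\omega_n$ --- call this system $\Pi_A$ --- while $W_t(\omega_2,\dots,\omega_n,\omega_1)$ is the reward of the \emph{same} policy on the \emph{same} $n$ channels with the \emph{same} current beliefs, but with the pointer starting on the channel of belief $\omega_1$ --- call this $\Pi_B$. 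Under Assumption~\ref{assumption} the update map $\tau$ is nondecreasing and satisfies $p_{01}\le\tau(\omega)\le p_{11}$, so both systems visit the $n$ channels in one and the same fixed cyclic order; $\Pi_B$'s pointer is simply initialized one position ``behind'' $\Pi_A$'s, and at time $t$ every channel carries the same belief in both systems.

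The core of the argument is to build $\Pi_A$ and $\Pi_B$ on a common probability space so that, once $\Pi_B$'s pointer first moves off its initial (worst) channel, the two systems probe the same channels and can be coupled to observe the same ``good''/``bad'' outcomes from then on; since $\Pi_A$ collects at most one unit at time $t$ and $\Pi_B$ at least zero, summing the discounted reward streams then yields $W_t(\omega_1,\dots,\omega_n)-W_t(\omega_2,\dots,\omega_n,\omega_1)\le\beta^{0}\cdot 1=1$. A convenient way to carry this out is induction on the horizon $T-t$: the base case $t=T$ holds since $W_T(\omega_1,\dots,\omega_n)=\omega_n\le 1\le 1+\omega_1=1+W_T(\omega_2,\dots,\omega_n,\omega_1)$ by Lemma~\ref{lem:T}, and the inductive step conditions on the first probe of each system, compares the two post-probe configurations, and reduces most of the resulting cases to the monotonicity of $W_{t+1}$ in each of its arguments --- which follows from $W_{t+1}$ being the reward of a \emph{fixed} policy, together with Assumption~\ref{assumption} --- and to the inductive form of the present inequality at time $t+1$.

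The one case that resists this routine reduction, and that forces the bound to be $1$ rather than $0$, is the one in which $\Pi_A$'s first probe succeeds while $\Pi_B$'s fails --- for instance when the channel of belief $\omega_1$ is ``bad'' and the channel of belief $\omega_n$ happens to stay ``good'' forever, in which case the inequality holds with equality. Making the ``same outcomes afterwards'' step rigorous in this regime is the main obstacle, for two reasons: the ``stay while good'' rule makes the number of slots $\Pi_B$ wastes on its initial channel a \emph{random}, state-dependent quantity rather than exactly one; and, more seriously, while $\Pi_B$ lingers on that channel the beliefs of \emph{all other} channels drift under $\tau$, so when $\Pi_B$ later reaches a channel $\Pi_A$ has already visited the two systems hold \emph{different} beliefs about it and their probe outcomes cannot be coupled to be literally identical. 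The argument must therefore be organized so that every discrepancy created by these $\tau$-drifts is charged against the single unit of reward $\Pi_B$ forgoes at time $t$, with no further deficit accumulating over the remaining horizon; carrying out this accounting is the substance of the proof.
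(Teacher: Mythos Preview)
Your interpretation of the two sides of the inequality is exactly right: both are expected rewards of the \emph{same} round-robin policy on the \emph{same} $n$ channels, the only difference being that $\Pi_A$ starts one position ahead of $\Pi_B$ on the cycle. This is precisely the starting point of the paper's proof.

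However, the execution you sketch has a genuine gap. Your central claim --- that ``once $\Pi_B$'s pointer first moves off its initial channel, the two systems probe the same channels and can be coupled to observe the same outcomes from then on'' --- is false. When $\Pi_B$ leaves channel~$1$ it lands on channel~$n$, but $\Pi_A$ may already have left channel~$n$; the two pointers coincide only in the single case where $\Pi_A$ has seen no ``$0$'' yet. In the other cases the gap between the pointers is still one (or has grown to two), and the belief-drift problem you correctly flag is not confined to one exceptional regime but pervades the entire inductive step: after one step of the recursion for $W_t$ each system has updated the belief of a \emph{different} channel, so neither the induction hypothesis (which compares a vector with its own cyclic shift) nor coordinatewise monotonicity applies to the pair of belief vectors you obtain. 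Your proposal ends by acknowledging that ``carrying out this accounting is the substance of the proof,'' which is an honest statement that the argument is not yet there.

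The paper sidesteps the belief-drift issue entirely by abandoning induction on $W_t$ and arguing directly on sample paths. Fix a realization of \emph{all} channel states over $[t,T]$; beliefs now play no role, and each policy sees a deterministic $\{0,1\}$ sequence. The key bookkeeping variable is the running difference $d(t') = \#0_L(t,t') - \#0_R(t,t')$ between the number of ``$0$''s seen by the two policies; since $\Pi_B$ starts one position behind $\Pi_A$, the pointers coincide exactly when $d$ first hits $+1$ (or $-(n-1)$). Until that moment $d(t')\le 0$, i.e.\ the partial sums of $\Pi_B$'s rewards dominate those of $\Pi_A$, so with discounting $\Pi_B$ has accrued at least as much as $\Pi_A$. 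At the meeting time itself $\Pi_A$ collects $1$ and $\Pi_B$ collects $0$, which is covered by the extra ``$+1$'' on the left; thereafter the two policies are literally identical. This combinatorial counting is what replaces the belief-level accounting you were unable to close. (A clean inductive proof \emph{does} exist, but it requires conditioning jointly on the time-$t$ states of \emph{both} channel~$1$ and channel~$n$ and updating both belief vectors accordingly --- after which the two vectors become exact cyclic shifts of one another in every case --- rather than expanding each $W_t$ separately as your sketch suggests.)
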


\color{blue}
\begin{proof}
We prove this lemma using a coupling argument along any sample path.
The LHS of the above inequality represents the expected reward of a policy
(referred to as L below) that probes in the sequence of channels $1$
followed by $n$, $n-1$, $\cdots$, and
then $1$ again, and so on, plus an extra reward of $1$;
the RHS represents the expected reward of a policy (referred to as R below)
that probes in the sequence of channels $n$ followed by $n-1$, $\cdots$, and $1$
and then $n$ again, and so on.
It helps to imagine lining up the $n$ channels along a circle in the sequence of $n$, $n-1$, $\cdots$, $1$, clock-wise, and thus L's starting position is $1$, R's starting position is $n$, exactly one spot ahead of L clock-wise.  Each will cycle around the circle till time $T$.

Now for any realization of the channel conditions (or any sample path of
the system), consider the sequence of ``$0$''s and
``$1$''s that these two policies see, and consider the position they are on the circle.  The reward a policy gets along a given sample path is
$R_l= \sum_{j=t}^{T}\beta^{j_l}$ for policy L, where $j_l = j$ if L sees a ``1'' at time $j$, and $0$ otherwise;
the reward for R is $R_r= \sum_{j=t}^{T}\beta^{j_r}$ with $j_r$ similarly defined.
There are two cases.

Case (1): the two eventually catch up with each other at some time $K\leq T$, i.e., at some point they
start probing exactly the same channel.  From this point on the two policies behave exactly the same way along the same sample path, and the reward they obtain from this point on is exactly the same.  Therefore in this case we only need to compare the rewards (L has an extra $1$) leading up to this point.

Case (2): The two never manage to meet within the horizon $T$.  In this case we need to compare the rewards for the entire horizon (from $t$ to $T$).

We will consider Case (1) first.  There are only two possibilities for the two policies to meet: (Case 1.a) either L has seen exactly one more ``0'' than R in its sequence, or (Case 1.b) R has seen exactly $n-1$ more ``0''s than L.  This is because the moment we see a ``0'' we will move to the next channel on the circle.  L is only one position behind R, so one more ``0'' will put it at exactly the same position as R.  The same with R moving $n-1$ more positions ahead to catch up with L.

Case (1.a): L sees exactly one more ``0'' than R in its sequence.
The extra ``0'' necessarily occurs at exactly time $K$, $t\leq K\leq T$, meaning that at $K$, L sees a ``0'' and R sees a ``1''.  From $t$ to $K$, if we write the sequence of rewards (zeros and ones) under L and R,
we observe the following: between $t$ and $K$ both L and R have equal number of zeros, while
for $\forall t' =t, t+1, \ldots, K-1$, the number of zeros up to time $t'$ is less (or no more) for L than for R. In other
words, L and R see the same number of ``0''s, but L's is always lagging behind (or no earlier).  That is, for every ``0'' R sees, L has a matching ``0'' that occurs no earlier than R's ``0."  This means that if we denote by
$R_l(t_1, t_2)$ the
rewards accumulated between $t_1$ and $t_2$, then for the rewards in $[t,K-1]$, we have $R_l(t, t') \geq R_r(t,t')$, for $\forall t'\leq K-1$, while $R_l(K, K)=\beta^K$ and
$R_r(K, K)=0$. Finally by definition we have $R_l(K+1,T )= R_r(K+1,T)$.
Therefore overall we have $1+R_l(t,T) \geq R_r(t,T)$, proving the above
inequality.

Case (1.b): R sees $n-1$ more ``0''s than L does. The comparison is simpler.  We only need to note that R's ``0''s must again precedes (or be no later than) L's since otherwise we will return to Case (1.a).
Therefore we have $R_l\geq R_r$, and thus $1+R_l \geq R_r$ is also true.

We now consider Case (2). The argument is essentially the same.  In this case the two don't get to meet, but they are on their way, meaning that either L has exactly the same ``0''s as R and their positions are no earlier (corresponding to Case (1.a)), or R has more ``0''s than L (but not up to $n-1$) and their positions are no later than L's (corresponding to Case (1.b)).
So either way we have $1+R_l \geq R_r$.

The proof is thus complete.
\end{proof}
\color{black}


\begin{lemma} \label{lem:exchange}
For all $j$, $1\leq j \leq n-3$, and all $x\geq y$, we have
\begin{eqnarray}
W_t(\omega_1,\hdots,\omega_{j},x,y,\hdots,\omega_n) \leq W_t(\omega_1,\hdots,\omega_{j},y,x,\hdots,\omega_n)
\end{eqnarray}
\end{lemma}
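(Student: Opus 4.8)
\textbf{Proof plan for Lemma~\ref{lem:exchange}.}
The plan is to argue by backward induction on $t$, using the recursion \eqref{eq:W} for $W_t$, and to carry Lemma~\ref{lem:exchange-border} along in the same induction. For the base case $t=T$ we have $W_T(\bar\omega)=\omega_n$; since the two coordinates being transposed occupy positions $j+1$ and $j+2$ with $j+2\le n-1<n$, neither side depends on them and the inequality holds with equality. So fix $t<T$ and suppose both exchange lemmas hold at time $t+1$.

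Let $\bar\omega^L$ and $\bar\omega^R$ denote the argument vectors on the left and right of the claimed inequality; they agree in every coordinate except that position $j+1$ holds $x$ in $\bar\omega^L$ and $y$ in $\bar\omega^R$, while position $j+2$ holds $y$ in $\bar\omega^L$ and $x$ in $\bar\omega^R$. In particular their $n$-th coordinate is the common value $\omega_n$, because $j+2\le n-1$. Expanding both with \eqref{eq:W}, the leading $\omega_n$ and the nonnegative multipliers $\omega_n$ and $1-\omega_n$ are identical on the two sides, so it suffices to compare the good-observation successor $(\tau(\omega_1),\hdots,\tau(\omega_{n-1}),p_{11})$ and the bad-observation successor $(p_{01},\tau(\omega_1),\hdots,\tau(\omega_{n-1}))$ term by term. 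In the good case the successor of $\bar\omega^L$ is obtained from that of $\bar\omega^R$ by transposing positions $j+1$ and $j+2$, which now hold $\tau(x)$ and $\tau(y)$; by Assumption~\ref{assumption} the map $\tau$ is nondecreasing, so $\tau(x)\ge\tau(y)$, and the induction hypothesis (this lemma at time $t+1$, with the same index $j\in[1,n-3]$) gives the inequality. In the bad case, prepending $p_{01}$ shifts every coordinate one step to the right, so the transposed pair $\tau(x),\tau(y)$ now occupies positions $j+2$ and $j+3$: if $j\le n-4$ then $j+3\le n-1$ and the induction hypothesis applies with index $j+1\in[1,n-3]$, whereas if $j=n-3$ the pair lands in the last two positions $(n-1,n)$ and Lemma~\ref{lem:exchange-border} at time $t+1$ supplies the inequality. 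Combining the two branches with their common nonnegative weights and adding back $\omega_n$ yields $W_t(\bar\omega^L)\le W_t(\bar\omega^R)$, which closes the induction.

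The only delicate point, and the reason Lemma~\ref{lem:exchange-border} has to be established in the same induction rather than afterward, is the boundary sub-case $j=n-3$ of the bad branch, where one belief update pushes the transposed coordinates from interior positions out to the terminal pair $(n-1,n)$. Apart from that, the argument is routine index bookkeeping, relying only on the linearity of $\tau$ and its monotonicity under Assumption~\ref{assumption} — which is exactly why this is the easier of the two exchange lemmas.
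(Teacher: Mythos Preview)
Your argument is correct and follows the same backward-induction strategy as the paper: expand both sides via~\eqref{eq:W}, note the common last coordinate $\omega_n$, and compare the two branches term by term using monotonicity of $\tau$ under Assumption~\ref{assumption}. You are in fact more careful than the paper on one point: the paper's proof simply writes ``the inequality is due to the induction hypothesis'' without isolating the boundary sub-case $j=n-3$ of the bad branch, where (as you observe) the prepended $p_{01}$ pushes the transposed pair out to positions $(n-1,n)$, so that Lemma~\ref{lem:exchange-border} rather than Lemma~\ref{lem:exchange} is what is needed at time $t+1$.

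One small overclaim: you say Lemma~\ref{lem:exchange-border} ``has to be established in the same induction rather than afterward.'' It certainly cannot come afterward, but a joint induction is not forced either. Inspecting the paper's proof of Lemma~\ref{lem:exchange-border}, it uses only Lemma~\ref{lem:bound} (proved independently by coupling) and its own induction hypothesis at $t+1$; it never calls Lemma~\ref{lem:exchange}. So the dependency is one-directional, and the cleanest organization is to prove Lemma~\ref{lem:exchange-border} in full first, then invoke it in the boundary sub-case here. Your simultaneous induction works too, but the circularity you guard against does not actually arise.
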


\begin{proof}
We prove this by induction over $t$. The claim is obviously true for $t=T$, since both sides will be equal to $\omega_n$, thereby establishing the induction basis.  Now suppose the claim is true for all $t+1, \cdots, T-1$.  We have
\begin{eqnarray}
&& W_t(\omega_1, \cdots, \omega_{j-1}, x, y, \cdots, \omega_n) \nonumber \\
&=& \omega_n (1+\beta W_{t+1}(\tau(\omega_1), \cdots, \tau(x), \tau(y), \cdots, \tau(\omega_{n-1}),
p_{11})) \nonumber \\
 &+& (1-\omega_n) \beta W_{t+1}(p_{01}, \tau(\omega_1), \cdots, \tau(x), \tau(y), \cdots, \tau(\omega_{n-1})) \nonumber \\
 &\leq& \omega_n (1+\beta W_{t+1}(\tau(\omega_1), \cdots, \tau(y), \tau(x), \cdots, \tau(\omega_{n-1}),
p_{11})) \nonumber \\
 &+& (1-\omega_n) \beta W_{t+1}(p_{01}, \tau(\omega_1), \cdots, \tau(y), \tau(x), \cdots, \tau(\omega_{n-1})) \nonumber \\
&=& W_t(\omega_1, \cdots, \omega_{j-1}, y, x, \cdots, \omega_n)
\end{eqnarray}
where the inequality is due to the induction hypothesis, and noting that $\tau()$ is a \color{blue} monotone increasing \color{black} mapping in the case of $p_{11}\geq p_{01}$.
\end{proof}

\begin{lemma} \label{lem:exchange-border}
For all  $x\geq y$, we have
\begin{eqnarray}
W_t(\omega_1,\hdots,\omega_{j},\hdots,\omega_{n-2}, x, y) \leq
W_t(\omega_1,\hdots,\omega_{j},\hdots,\omega_{n-2}, y,x).
\end{eqnarray}
\end{lemma}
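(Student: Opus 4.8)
The plan is to prove this inequality by backward induction on $t$, after first using affinity to collapse the claim ``for all $x\ge y$'' into a single inequality at the corners $\{0,1\}$ of the last two coordinates. By the Proposition above,
\[
W_t(\omega_1,\cdots,\omega_{n-2},y,x)-W_t(\omega_1,\cdots,\omega_{n-2},x,y)=(x-y)\big[\,W_t(\omega_1,\cdots,\omega_{n-2},0,1)-W_t(\omega_1,\cdots,\omega_{n-2},1,0)\,\big],
\]
so for $x\ge y$ the lemma is equivalent to the single inequality $W_t(\omega_1,\cdots,\omega_{n-2},0,1)\ge W_t(\omega_1,\cdots,\omega_{n-2},1,0)$, which I would establish under the standing assumption $\omega_1\le\cdots\le\omega_{n-2}$ that accompanies the lemma's intended use. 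The base case $t=T$ is immediate: $W_T(\bar\omega)=\omega_n$ makes the two sides equal to $1$ and $0$.

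For the inductive step, assume the statement at time $t+1$. Expanding both sides with the recursion~(\ref{eq:W}), using $\tau(0)=p_{01}$, $\tau(1)=p_{11}$, and the fact that the weight $\omega_n\in\{0,1\}$ annihilates one branch in each expansion, reduces the goal to
\[
1+\beta\,W_{t+1}\big(\tau(\omega_1),\cdots,\tau(\omega_{n-2}),p_{01},p_{11}\big)\;\ge\;\beta\,W_{t+1}\big(p_{01},\tau(\omega_1),\cdots,\tau(\omega_{n-2}),p_{11}\big).
\]
Denoting the two $W_{t+1}$-terms by $A$ and $B$, it suffices to prove $1+A\ge B$, since then $1+\beta A=(1-\beta)+\beta(1+A)\ge(1-\beta)+\beta B\ge\beta B$ as $0\le\beta\le 1$. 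Under Assumption~\ref{assumption}, $\tau$ is nondecreasing and $p_{01}\le\tau(\omega)\le p_{11}$ for every $\omega$, so the vector $\bar\nu:=(p_{01},\tau(\omega_1),\cdots,\tau(\omega_{n-2}),p_{11})$ is increasingly ordered. Applying Lemma~\ref{lem:bound} at time $t+1$ to $\bar\nu$ (cyclically moving its least entry $p_{01}$ to the last slot) gives $1+W_{t+1}(\tau(\omega_1),\cdots,\tau(\omega_{n-2}),p_{11},p_{01})\ge B$; the inductive hypothesis, applied with ordered prefix $(\tau(\omega_1),\cdots,\tau(\omega_{n-2}))$ and $x=p_{11}\ge y=p_{01}$, gives $W_{t+1}(\tau(\omega_1),\cdots,\tau(\omega_{n-2}),p_{11},p_{01})\le A$. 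Chaining these yields $1+A\ge B$, as required. (If $p_{01}=0$ or $p_{11}=1$, so that $\bar\nu$ touches the boundary, one applies Lemma~\ref{lem:bound} to slightly perturbed vectors and passes to the limit, since each $W_t$ is a polynomial, hence continuous.)

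The crux, I expect, is exactly this last step: recognizing that after peeling off one stage via~(\ref{eq:W}) and absorbing the discount factor, the border swap at time $t$ reorganizes precisely into the configuration produced by the coupling Lemma~\ref{lem:bound} together with the residual two-coordinate swap $(p_{11},p_{01})\mapsto(p_{01},p_{11})$ --- which is itself the time-$(t+1)$ instance of the very statement being proved. Lining up the cyclic shift from Lemma~\ref{lem:bound} with this residual swap so that the induction closes is the delicate point; the corner reduction, the base case, and the $\beta\le 1$ estimate are all routine.
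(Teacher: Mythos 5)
Your proposal is correct and follows essentially the same route as the paper's proof: reduce to the corner case $W_t(\cdots,0,1)\ge W_t(\cdots,1,0)$ by affinity, induct backward on $t$, expand one stage of the recursion~(\ref{eq:W}), and close the induction by chaining the time-$(t+1)$ swap $W_{t+1}(\cdots,p_{01},p_{11})\ge W_{t+1}(\cdots,p_{11},p_{01})$ with the coupling bound of Lemma~\ref{lem:bound} applied to the ordered vector $(p_{01},\tau(\omega_1),\cdots,\tau(\omega_{n-2}),p_{11})$. Your explicit handling of the ordering of the prefix and of the boundary cases $p_{01}=0$, $p_{11}=1$ only makes precise what the paper leaves implicit.
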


\begin{proof}
This lemma is proved inductively.  The claim is obviously true for $t=T$.
Assume it also holds for times $t+1, \cdots, T-1$. We have by the definition of $W_t()$ and due to its linearity property:
\begin{eqnarray}
&& W_t(\omega_1,\hdots, \omega_{n-2},y,x) - W_t(\omega_1,\hdots, \omega_{n-2},x,y)  \nn \\
&=& (x - y)  (W_t(\omega_1 , \hdots , \omega_{n-2} , 0 ,1 )-W_t(\omega_1 , \hdots , \omega_{n-2} , 1, 0) )\nn\\
&=& (x - y) \left(
1+\beta W_{t+1}(\tau(\omega_1), \hdots, \tau(\omega_{n-2}) , p_{01} ,p_{11} )-
\beta W_{t+1}(p_{01},\tau(\omega_1), \hdots, \tau(\omega_{n-2}), p_{11}) \right).\nn
\end{eqnarray}
But from the induction hypothesis we know that
\begin{eqnarray}
 W_{t+1}(\tau(\omega_1) , \hdots ,\tau( \omega_{n-2}) , p_{01} ,p_{11} ) \geq
W_{t+1}(\tau(\omega_1), \hdots , \tau(\omega_{n-2}), p_{11}, p_{01}).
\end{eqnarray}
This means that
\begin{eqnarray}
&& 1+\beta W_{t+1}(\tau(\omega_1) , \hdots ,\tau( \omega_{n-2}) , p_{01} ,p_{11} )-
\beta W_{t+1}(p_{01},\tau(\omega_1), \hdots , \tau(\omega_{n-2}), p_{11}) \nn\\
&\geq &
1+ \beta W_{t+1}(\tau(\omega_1) , \hdots ,\tau(\omega_{n-2}) , p_{11},p_{01} )-
\beta W_{t+1}(p_{01}, \tau(\omega_1), \hdots , \tau(\omega_{n-2}), p_{11}) \geq 0 ~,\nn
\end{eqnarray}
where the last inequality is due to Lemma \ref{lem:bound}
(note that in that lemma we proved
$1+A \geq B$, which obviously implies $1+\beta A \geq \beta B$ for $0\leq \beta\leq 1$ that is used above).  This, together with the condition $x \geq y$, completes the proof.
\end{proof}

We are now ready to prove the main theorem.


{\em Proof of Theorem \ref{mainThm}:}
The basic approach is by induction on $t$.  The optimality of the myopic policy at time $t=T$ is obvious.  So the induction basis is established.  Now assume that the myopic policy is optimal for all times $t+1, t+2, \cdots, T-1$, and we will show that it is also optimal at time $t$.
By Lemma \ref{lem:nes-suf} this is equivalent to establishing the following
\begin{eqnarray}
W_t(\omega_1 , \hdots , \omega_{i-1} , \omega_{i+1} , \hdots , \omega_n , \omega_i) \leq  W_t(\omega_1, \hdots , \omega_n).
\end{eqnarray}

But we know from Lemmas \ref{lem:exchange} and \ref{lem:exchange-border} that,
\begin{eqnarray*}
&& W_t(\omega_1 , \hdots , \omega_{i-1} , \omega_{i+1} , \hdots , \omega_n , \omega_i) \leq  W_t(\omega_1 , \hdots , \omega_{i-1} , \omega_{i+1} , \hdots , \omega_i , \omega_n)  \nn \\
&\leq&
W_t(\omega_1 , \hdots , \omega_{i-1} , \omega_{i+1} , \hdots , \omega_i ,
 \omega_{n-1} , \omega_n)  \leq \hdots \leq  W_t(\omega_1, \hdots , \omega_n)~,
\end{eqnarray*}
where the first inequality is the result of Lemma~\ref{lem:exchange-border}, while the remaining inequalities
are repeated application of Lemma~\ref{lem:exchange},
completing the proof.
\hspace*{\fill}~\IEEEQED\par

\color{blue}
We would like to emphasize that from a technical point of view, Lemma \ref{lem:bound} is the key to the whole proof: it leads to Lemma \ref{lem:exchange-border}, which in turn leads to Theorem \ref{mainThm}.  While Lemma \ref{lem:exchange-border} was easy to conceptualize as a sufficient condition  to prove the main theorem, Lemma \ref{lem:bound} was much more elusive to construct and prove.  This, indeed, marks the main difference between the proof techniques used here vs. that used in our earlier work \cite{icc-08}: Lemma~\ref{lem:bound}
relies on a coupling argument instead of the convex analytic properties of the value function.
\color{black}

\section{The Case of $p_{11} < p_{01}$}\label{sec:counter}

In the previous section we showed that a myopic policy is optimal
if $p_{11} \geq p_{01}$.  In this section we examine what happens
when $p_{11} < p_{01}$, which corresponds to
the case when the Markovian channel state process exhibits a negative auto-correlation over a unit time.
\color{blue}
This is perhaps a case of less practical interest and relevance.  However, as we shall see this case presents a greater degree of technical complexity and richness than the previous case.  Specifically, we \color{black}
first show that when the number of channels
is three ($n=3$) or when the discount factor $\beta \leq \frac{1}{2}$,
the myopic policy remains optimal even for the
case of $p_{11} < p_{01}$ (the proof for two channels in this case
was given earlier in \cite{Zhao&etal:08TWC}).  We thus conclude that
the myopic policy is optimal for $n\leq 3$ or $\beta \leq 1/2$ regardless of
the transition probabilities.  We then present a counter example showing that
the the myopic policy is not optimal in general when $n\geq 4$
and $\beta > 1/2$.
In particular, our counter example is for a finite horizon
with $n=4$ and $\beta = 1$.

\subsection{$n=3$ or $\beta\leq \frac{1}{2}$}

We start by developing some results parallel to those presented
in the previous section for the case of $p_{11}\geq p_{01}$.

\begin{lemma}\label{thm:maina thm}
There exist $T$ n-variable polynomial functions of order $1$, denoted by $Z_t(), t=1, 2, \cdots, T$,
i.e., each function is linear in all the elements, and can be represented recursively in the following form:
\begin{eqnarray}\label{eq:Z}
Z_t(\bar \omega) := \omega_n (1+\beta Z_{t+1}( p_{11} , \tau(\omega_{n-1}),\hdots ,
\tau(\omega_{1})) )\nn\\
+(1-\omega_n) \beta Z_{t+1}(\tau(\omega_{n-1}),\hdots , \tau(\omega_{1}),p_{01} ) .
\end{eqnarray}
where $Z_T(\bar \omega)=\omega_n$.
\end{lemma}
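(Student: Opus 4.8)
The plan is to prove this by backward induction on $t$, in exact parallel with the proof of Lemma~\ref{lem:T}. The property carried through the induction is that $Z_t$ is \emph{multiaffine}: a polynomial of total degree one that, with any $n-1$ of its arguments held fixed, is affine in the remaining one. Since the terminal choice $Z_T(\bar\omega)=\omega_n$ together with the recursion (\ref{eq:Z}) determines the family $\{Z_t\}_{t=1}^{T}$ uniquely, establishing ``existence'' reduces to checking that this multiaffine property is preserved by one application of (\ref{eq:Z}).

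For the base case, $Z_T(\bar\omega)=\omega_n$ is a degree-one polynomial and is plainly affine in each coordinate. For the inductive step, assume $Z_{t+1}$ is multiaffine. The arguments handed to $Z_{t+1}$ in (\ref{eq:Z}) are either the constants $p_{11},p_{01}$ or the images $\tau(\omega_i)=\omega_i p_{11}+(1-\omega_i)p_{01}$, each of which is affine in the single variable $\omega_i$; since distinct single variables are substituted into distinct slots (and $\omega_n$ into none), substituting these affine functions into the multiaffine $Z_{t+1}$ preserves multiaffinity, so both $Z_{t+1}(p_{11},\tau(\omega_{n-1}),\ldots,\tau(\omega_1))$ and $Z_{t+1}(\tau(\omega_{n-1}),\ldots,\tau(\omega_1),p_{01})$ are multiaffine in $(\omega_1,\ldots,\omega_{n-1})$ and do not involve $\omega_n$. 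Calling these $A$ and $B$, we have $Z_t(\bar\omega)=\omega_n(1+\beta A)+(1-\omega_n)\beta B$; this is visibly affine (indeed linear) in $\omega_n$ when the others are fixed, and for each $i<n$, fixing every variable but $\omega_i$ leaves an expression of the form $\omega_n\,\ell_i(\omega_i)+(1-\omega_n)\,\ell_i'(\omega_i)$ with $\ell_i,\ell_i'$ affine, hence affine in $\omega_i$; its total degree is likewise one. This closes the induction.

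I do not expect a real obstacle; the one point deserving care is exactly the closure step just described — that multiplying a multiaffine function of $(\omega_1,\ldots,\omega_{n-1})$ by $\omega_n$ and by $1-\omega_n$ and adding keeps the result affine in \emph{every} coordinate, which is the short computation above. I would also add, anticipating the analog of Corollary~\ref{col:W}, a remark that when $\bar\omega=[\omega_1,\ldots,\omega_n]$ is increasingly ordered the recursion (\ref{eq:Z}) is precisely the expected-reward recursion of the myopic policy in the regime $p_{11}<p_{01}$: since $\tau$ is now order-reversing and $p_{11}\le\tau(\omega)\le p_{01}$ for all $\omega$, probing the largest belief $\omega_n$ and then, after applying $\tau$ to and reversing the rest of the list, placing $p_{11}$ first (on a ``$1$'') or $p_{01}$ last (on a ``$0$'') keeps the list increasingly ordered and keeps the probed channel the largest, matching the alternating-circular implementation of Section~\ref{sec:policy-myopic}. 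This identification is what makes $Z_t$ the right counterpart of $W_t$ and is what the subsequent exchange lemmas for the $p_{11}<p_{01}$ case will build on.
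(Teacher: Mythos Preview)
Your approach is correct and is essentially the paper's own (implicit) argument: backward induction using the affinity of $\tau$, in exact parallel with Lemma~\ref{lem:T}. One terminological slip to fix: $Z_t$ is \emph{not} of total degree one (already $Z_{T-1}$ contains the cross term $\omega_n\,\tau(\omega_{n-1})$); the paper's ``order~1'' means degree at most one \emph{in each variable separately} (multiaffine), which is precisely what your inductive step establishes and what the later interchange identity needs, so simply drop the two ``total degree one'' claims.
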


\begin{cor}\label{col:Z}
\tcb{$Z_t(\bar\omega)$ given in (\ref{eq:Z}) represents the expected total reward of the myopic policy when $\bar\omega$
is ordered in increasing order of $\omega_i$.}
\end{cor}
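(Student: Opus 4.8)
The plan is to prove this by backward induction on $t$, exactly paralleling the (suppressed) argument behind Corollary \ref{col:W}, but now tracking the order-reversing behaviour of $\tau$ in the regime $p_{11}<p_{01}$. First I would record the two structural facts that drive everything. Fact (i): since $p_{11}<p_{01}$, the map $\tau(\omega)=\omega p_{11}+(1-\omega)p_{01}$ is affine and strictly decreasing, so it reverses the relative order of any collection of unobserved channels. Fact (ii): for every $\omega\in[0,1]$ one has $p_{11}=\tau(1)\le \tau(\omega)\le \tau(0)=p_{01}$, so a channel just observed in state ``1'' (updated belief $p_{11}$) is no larger than the updated belief of any unobserved channel, while a channel just observed in state ``0'' (updated belief $p_{01}$) is no smaller than the updated belief of any unobserved channel. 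These are the negative-correlation analogues of the facts invoked for Corollary \ref{col:W}, and they are already stated in the description of the policy in Section \ref{sec:policy-myopic}.

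With these in hand, the base case $t=T$ is immediate: $Z_T(\bar\omega)=\omega_n$ is precisely the expected one-slot reward of probing the highest-belief channel, which is what the myopic policy does in the last slot. For the inductive step, suppose $Z_{t+1}$ equals the myopic expected reward-to-go on every increasingly-ordered vector, and let $\bar\omega=[\omega_1,\dots,\omega_n]$ with $\omega_1\le\cdots\le\omega_n$. The myopic policy probes channel $n$, collecting expected immediate reward $\omega_n$; I then split on the observation. With probability $\omega_n$ channel $n$ is ``good'' and its belief becomes $p_{11}$, while channels $1,\dots,n-1$ move to $\tau(\omega_1)\ge\cdots\ge\tau(\omega_{n-1})$ by Fact (i); by Fact (ii), $p_{11}\le\tau(\omega_{n-1})$, so the resulting belief vector sorted in increasing order is exactly $(p_{11},\tau(\omega_{n-1}),\dots,\tau(\omega_1))$. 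With probability $1-\omega_n$ channel $n$ is ``bad'' and its belief becomes $p_{01}$, and since $\tau(\omega_1)\le p_{01}$ the sorted belief vector is exactly $(\tau(\omega_{n-1}),\dots,\tau(\omega_1),p_{01})$. Applying the induction hypothesis to each of these two sorted vectors, discounting by $\beta$, and weighting by the observation probabilities reproduces verbatim the right-hand side of (\ref{eq:Z}); hence $Z_t(\bar\omega)$ equals the myopic expected total reward from time $t$ on, which closes the induction. (The well-definedness of $Z_t$ as an order-$1$ polynomial is supplied by Lemma \ref{thm:maina thm}, so no separate existence argument is needed.)

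I do not expect a genuine obstacle: the content is entirely the sorting bookkeeping. The one point requiring care is the claim that the updated vectors are \emph{literally} the argument lists appearing in (\ref{eq:Z}), not merely equal to them after re-sorting; this is exactly what Facts (i) and (ii) deliver, since the just-observed channel lands at the correct end of the list ($p_{11}$ at the small end, $p_{01}$ at the large end) and the unobserved channels appear in reversed order $\tau(\omega_{n-1})\le\cdots\le\tau(\omega_1)$. Ties (e.g.\ $p_{11}=\tau(\omega_{n-1})$, or coincident $\omega_i$'s) are harmless, since the myopic policy breaks them arbitrarily and $Z_{t+1}$ takes the same value under any such tie-break; this is the same harmlessness already used implicitly in Corollary \ref{col:W}.
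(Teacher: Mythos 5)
Your proof is correct and follows essentially the same route as the paper: the paper's justification (and the underlying argument it borrows from Corollary~\ref{col:W} and the policy description in Section~\ref{sec:policy-myopic}) rests on exactly the two facts you isolate, namely that $\tau$ is order-reversing when $p_{11}<p_{01}$ and that $p_{11}\le\tau(\omega)\le p_{01}$, so the recursion in (\ref{eq:Z}) always hands $Z_{t+1}$ an increasingly ordered vector with the probed channel landing at the correct end. Your backward induction is just a slightly more explicit write-up of that same bookkeeping, so no further changes are needed.
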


\tcb{Similar to Corollary \ref{col:W}, the above result follows directly from the policy description given in Section \ref{sec:policy-myopic}. 
}

\ignore{
\begin{proof}
\tcm{The proof follows directly from the structure of the myopic policy established in \cite{Zhao&etal:08TWC}. Below we reproduce
the proof given in \cite{Zhao&etal:08TWC} for completeness.}

\tcb{We see from (\ref{eq:Z}) that $Z_t(\bar\omega)$ is the expected total reward of the following policy: we probe the $n$th channel (rightmost);
if it turns out to be ``1'', meaning for the next time step its probability of being available is $p_{11}$,
then we completely reverse the order of these channels, i.e., the $1$st channel is now in the $n$-th position, and so on; if it turns out to
be ``0'', then we keep this channel
in its original rightmost position (its availability probability for the next step is $p_{01}$), and reverse the order of the remaining $n-1$
channels.}

\tcb{To see that this is the myopic policy when $p_{11}< p_{01}$, we note that under the above policy at any time the list of information states
are increasingly ordered, and the policy always probes the $n$-th channel, the one with the largest probability of being available.
To see the increasing order, note that for any $0 \leq \omega \leq 1$, we have $p_{01} \leq \tau(\omega) \leq p_{11}$ when $p_{11} \geq p_{01}$.
Furthermore, under the assumption $p_{11} < p_{01}$, $\tau(\omega)$ is a decreasing function.  Therefore the channel reordering process of the
policy described above ensures that at each step the information states are ordered in increasing order.}
\end{proof}
} 

It follows that the function $Z_t$ also has the same linearity property
presented earlier, i.e.
\begin{eqnarray}
&&Z_t(\omega_1, \cdots, \omega_{n-2}, y, x) -
Z_t(\omega_1,\cdots,\omega_{n-2}, x, y)\nn\\
& =&
(x-y)(Z_t(\omega_1,\cdots,\omega_{n-2}, 0, 1)-
Z_t(\omega_{1},\cdots,\omega_{n-2}, 1, 0)) ~.
\end{eqnarray}
Similar results hold when we change the positions of $x$ and $y$.


In the next lemma and theorem we prove that the myopic
policy is still optimal when $p_{11} < p_{01}$ if
$n=3$ or $\beta \leq 1/2$ .
In particular, Lemma \ref{lem:n=3} below is the analogy of
Lemmas \ref{lem:exchange} and \ref{lem:exchange-border} combined.

\begin{lemma}\label{lem:n=3}
At time $t$ ($t=1, 2, \cdots, T$), for all $j\leq n-2$, we have the following
inequality for $\forall 1\geq x\geq y\geq 0$ if either $n=3$ or $\beta\leq 1/2$:
\begin{eqnarray}
\color{blue}
Z_t(\omega_1, \ldots, \omega_j, y, x, \omega_{j+3},\ldots, \omega_n)
&\geq&
Z_t(\omega_1, \ldots, \omega_j, x, y, \omega_{j+3},\ldots, \omega_n).
\color{black}
\end{eqnarray}
\end{lemma}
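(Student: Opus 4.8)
The plan is to mimic the structure of the proof in the $p_{11}\ge p_{01}$ case, i.e.\ to proceed by backward induction on $t$, using the recursive definition (\ref{eq:Z}) of $Z_t$ together with the linearity (affine-in-each-coordinate) property just stated for $Z_t$. The induction basis $t=T$ is immediate since $Z_T(\bar\omega)=\omega_n$ does not depend on the swapped coordinates (for $j\le n-2$, neither $x$ nor $y$ sits in the last slot). For the inductive step we split into two cases exactly as in Lemmas \ref{lem:exchange} and \ref{lem:exchange-border}: the ``interior'' case where the pair $(x,y)$ to be swapped does not touch the last coordinate after one application of $\tau$ and the channel reversal, and the ``border'' case where it does. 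Recall that under $p_{11}<p_{01}$ the map $\tau$ is monotone \emph{decreasing} and the recursion (\ref{eq:Z}) reverses the order of the first $n-1$ entries; so after one step a neighboring pair $(x,y)$ in positions $(j+1,j+2)$ with $j+2\le n-1$ becomes (up to the decreasing relabeling) a neighboring pair $(\tau(x),\tau(y))$ with $\tau(x)\le\tau(y)$ sitting in interior positions, and the induction hypothesis applies directly — this is the easy case.

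The border case is where the real work lies. Here, after expanding $Z_t$ via (\ref{eq:Z}) and using the affine/linearity identity, the difference $Z_t(\ldots,y,x)-Z_t(\ldots,x,y)$ factors as $(x-y)$ times a bracket of the form $\big[1+\beta Z_{t+1}(\cdots,p_{11}\text{ and }p_{01}\text{ in some order})-\beta Z_{t+1}(\cdots)\big]$, and we must show this bracket is nonnegative. This is precisely the analogue of Lemma \ref{lem:exchange-border}, and just as there it should reduce to an inequality of the type $1+\beta A\ge \beta B$ where $A,B$ are $Z_{t+1}$-values of the same vector probed from two different cyclic starting points — i.e.\ the analogue of the coupling bound in Lemma \ref{lem:bound}. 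The difficulty is that under $p_{11}<p_{01}$ the policy's probing order is not a clean round robin but the \emph{alternating} circular order described in Section \ref{sec:policy-myopic}, so the clean ``L is exactly one step behind R on the circle'' picture no longer holds verbatim; this is exactly why the lemma needs the extra hypothesis $n=3$ or $\beta\le 1/2$. For $\beta\le 1/2$ I expect a direct crude bound to work: any single future reward difference is at most $1$ in magnitude, and the geometric tail $\sum_{k\ge1}\beta^k\le 1$ when $\beta\le 1/2$, so the worst-case discrepancy between the two $Z_{t+1}$ terms is dominated by the additive $1$. For $n=3$ one exploits that the cycle has length $3$, so the number of distinct configurations the two coupled policies can be in is small enough to enumerate and check the bracket directly (again reducing to a Lemma~\ref{lem:bound}-style statement specialized to $n=3$).

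Concretely, the steps in order are: (i) state and verify the affine/linearity reduction so that the border-case difference is $(x-y)\cdot[\text{bracket}]$; (ii) rewrite the bracket, using (\ref{eq:Z}) once more, as $1+\beta(\text{one-step reward/continuation for starting-point }A)-\beta(\text{same for starting-point }B)$ where both $A$ and $B$ are the myopic continuation on the \emph{same} increasingly-ordered vector but with probing begun one position apart on the alternating cycle; (iii) under $\beta\le 1/2$, bound $|\beta A-\beta B|\le \beta\sum_{k\ge0}\beta^k\cdot(\text{per-slot reward gap}\le 1)\le 1$ — actually one must be slightly careful and couple the two sample paths first, as in Lemma \ref{lem:bound}, so that the per-slot gaps telescope, but the point is that $\beta\le1/2$ makes the residual negligible against the $+1$; (iv) under $n=3$, enumerate the finitely many relative configurations of the two coupled alternating-cycle walks and check the bracket is $\ge 0$ in each, which is the $n=3$ specialization of Lemma \ref{lem:bound}. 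The main obstacle, and the step I would spend the most care on, is (iv) together with the coupling in (iii): making the alternating-order coupling argument precise so that, as in Lemma \ref{lem:bound}, along every sample path the rewards collected by the two policies can be matched up (each ``0'' seen by one matched to a no-earlier ``0'' seen by the other) despite the order-reversals at each good observation — the order reversals are what threaten the monotone ``lagging'' relationship that made the $p_{11}\ge p_{01}$ coupling go through cleanly.
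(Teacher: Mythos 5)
Your overall skeleton matches the paper's proof: backward induction on $t$ with trivial basis at $t=T$, the affine-swap identity reducing the claim to comparing the $(\ldots,0,1,\ldots)$ and $(\ldots,1,0,\ldots)$ vectors, the split into an interior case (where, after one application of (\ref{eq:Z}), the reversal together with the decreasing $\tau$ turns the swapped pair into an adjacent pair in the interior to which the induction hypothesis applies) and a border case in which the difference becomes $1+\beta Z_{t+1}(\cdot)-\beta Z_{t+1}(\cdot)$, and the crude bound when $\beta\le 1/2$. On that last point you over-engineer slightly: no coupling or telescoping is needed at all, since $0\le Z_{t+1}\le\sum_{k\ge 0}\beta^k$ immediately gives $1+\beta Z_{t+1}(\cdot)-\beta Z_{t+1}(\cdot)\ge 1-\frac{\beta}{1-\beta}\ge 0$, which is exactly the paper's one-line argument for $\beta\le 1/2$.

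The genuine gap is your step (iv), the $n=3$ border case. You leave it as an unspecified ``enumerate the relative configurations of two coupled alternating-cycle walks,'' i.e.\ an $n=3$ analogue of Lemma~\ref{lem:bound}, and you yourself flag that the order reversals destroy the ``one position behind'' lag structure that made that coupling work; nothing in your plan shows how to repair it, and this is precisely the nontrivial content of Lemma~\ref{lem:n=3}. The paper does not attempt any coupling here: it closes the $n=3$ case purely algebraically. After one expansion the border difference is $1+\beta\bigl(Z_{t+1}(p_{11},p_{01},\tau(\omega_1))-Z_{t+1}(p_{11},\tau(\omega_1),p_{01})\bigr)$; applying the affine identity a second time, this equals $1+\beta(\tau(\omega_1)-p_{01})\bigl(Z_{t+1}(p_{11},0,1)-Z_{t+1}(p_{11},1,0)\bigr)$. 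Since $p_{11}<p_{01}$ forces $\tau(\omega_1)\in[p_{11},p_{01}]$, the factor $\tau(\omega_1)-p_{01}$ lies in $[-1,0]$, while the induction hypothesis gives $Z_{t+1}(p_{11},0,1)-Z_{t+1}(p_{11},1,0)\ge 0$ (and, expanding this difference one more step and invoking the hypothesis again, that it is at most $1$), so the whole expression is at least $1-\beta\bigl(Z_{t+1}(p_{11},0,1)-Z_{t+1}(p_{11},1,0)\bigr)\ge 0$. So either adopt this two-line algebraic finish (which is what actually confines the result to $n=3$ rather than all $n$), or be prepared to construct and verify the alternating-cycle coupling in full detail --- as it stands, that step of your proposal is a hope, not a proof.
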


\begin{proof}
We prove this by induction on $t$.  The claim is obviously true for $t=T$.  Now suppose it's true for $t+1, \cdots, T-1$.
Due to the linearity property of $Z_t$,
\begin{eqnarray}
\color{blue}
\lefteqn{Z_t(\omega_1, \ldots, \omega_j, y, x, \omega_{j+3},\ldots, \omega_n) -
Z_t(\omega_1, \ldots, \omega_j, x, y, \omega_{j+3},\ldots, \omega_n)} \nonumber \\
& = & (x-y)\left(Z_t(\omega_1, \ldots, \omega_j, 0, 1, \omega_{j+3},\ldots, \omega_n) -
Z_t(\omega_1, \ldots, \omega_j, 1, 0, \omega_{j+3},\ldots, \omega_n) \right).
\color{black}
\end{eqnarray}
Thus it suffices to
show that
\color{blue}
$Z_t(\omega_1, \ldots, \omega_j, 0, 1, \omega_{j+3},\ldots, \omega_n) \geq
Z_t(\omega_1, \ldots, \omega_j, 1, 0, \omega_{j+3},\ldots, \omega_n) $.
\color{black}

We treat the case when $j < n-2$ and $j=n-2$ separately.  Indeed, without loss of generality, let $j=n-3$ (the proof follows exactly for all $j\leq n-3$ with more
lengthy notations).
At time $t$ we have
\begin{eqnarray*}
\lefteqn{ Z_t(\omega_1, \ldots, \omega_{n-3}, 0, 1, \omega_n) -
Z_t(\omega_1, \ldots, \omega_{n-3}, 1, 0, \omega_n) }\\
&=& \color{blue} \omega \beta (Z_{t+1}(p_{11}, p_{11}, p_{01}, \tau(\omega_{n-3}), \ldots, \tau(\omega_1))
- Z_{t+1}(p_{11}, p_{01}, p_{11}, \tau(\omega_{n-3}), \ldots, \tau(\omega_1) ))   \\
&+ & \color{blue} (1-\omega) \beta (Z_{t+1}(p_{11}, p_{01}, \tau(\omega_{n-3}), \ldots, \tau(\omega_1), p_{01})
- Z_{t+1}(p_{01}, p_{11}, \tau(\omega_{n-3}), \ldots, \tau(\omega_1), p_{01})) \\
\color{black}
&\geq& 0
\end{eqnarray*}
where the last inequality is due to the induction hypothesis.

Now we will consider the case when $j=n-2$.
\begin{eqnarray}
\lefteqn{ Z_t(\omega_1, \ldots, \omega_{n-2},0,1) - Z_t(\omega_1, \ldots, \omega_{n-2}, 1, 0) }\nonumber
\\
&=& 1+ \beta Z_{t+1}(p_{11}, p_{01}, \tau(\omega_{n-2}), \ldots, \tau(\omega_1)) - \beta Z_{t+1}(p_{11}, \tau(\omega_{n-2}), \ldots, \tau(\omega_1) , p_{01}).  \label{n3}
\end{eqnarray}

Next we show that if $\beta \leq 1/2$ or $n=3$ the right hand side of (\ref{n3})
is non-negative.

If $\beta \leq 1/2$, then
\begin{eqnarray*}
&& 1+ \beta Z_{t+1}(p_{11}, p_{01}, \tau(\omega_{n-2}), \ldots, \tau(\omega_1)) - \beta Z_{t+1}(p_{11}, \tau(\omega_{n-2}), \ldots, \tau(\omega_1) , p_{01})  \\
& & \hspace*{.3in} \geq  1 - \frac{\beta}{1-\beta} \geq 0.
\end{eqnarray*}

If $n=3$, then
\begin{eqnarray*}
\lefteqn{1+ \beta Z_{t+1}(p_{11}, p_{01}, \tau(\omega_1)) - \beta Z_{t+1}(p_{11}, \tau(\omega_1) , p_{01}) }
\\
&=& 1 + \beta (\tau(\omega_1) - p_{01}) (Z_{t+1}(p_{11}, 0, 1) - Z_{t+1}(p_{11}, 1, 0))  \\
&\geq& 1 - \beta (Z_{t+1}(p_{11}, 0, 1) - Z_{t+1}(p_{11}, 1, 0))\\
& \geq & 0
\end{eqnarray*}
where the first inequality is due to the fact that
$-1\leq \tau(\omega_1)-p_{01} \leq 0$ and the
last inequality is given by the induction hypothesis.
\end{proof}

\begin{theorem}\label{thm:n=3}
Consider Problem (P1). Assume that $p_{11}<p_{01}$.
The myopic policy is optimal for the case of $n=3$ and the case of $\beta\leq 1/2$ with arbitrary $n$. More precisely, for these two cases,  $\forall t$, $1\leq t \leq T$, we have
\begin{equation}
V_t(\bar\omega; a = j ) -
V_t(\bar\omega; a = i ) \geq 0,
\end{equation}
if $\omega_j \geq \omega_i$ for $i=1, \cdots, n$.
\end{theorem}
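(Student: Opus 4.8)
The plan is to carry out, essentially verbatim, the argument used for Theorem~\ref{mainThm}, with $W_t$ replaced by $Z_t$ and Lemmas~\ref{lem:exchange}--\ref{lem:exchange-border} replaced by Lemma~\ref{lem:n=3}. I argue by backward induction on $t$. For $t=T$ both sides of the claimed inequality reduce to the largest component of $\bar\omega$, so the claim is immediate; assume then that the myopic policy is optimal at $t+1,\ldots,T$. By Corollary~\ref{col:Z} together with the induction hypothesis, $Z_{t+1}$ evaluated on an increasingly ordered belief vector equals the myopic reward-to-go from time $t+1$, and more generally $Z_t(\bar\nu)$ is the reward of the (possibly non-myopic) policy that probes position $n$ of $\bar\nu$ once and then acts myopically.

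First I would record the analogue of Lemma~\ref{lem:nes-suf} for the present regime: granting optimality of the myopic rule from $t+1$ on, optimality at time $t$ is equivalent to
\[
Z_t(\omega_1,\ldots,\omega_{i-1},\omega_{i+1},\ldots,\omega_n,\omega_i)\ \le\ Z_t(\omega_1,\ldots,\omega_n)\qquad\text{for all }\ \omega_1\le\cdots\le\omega_n .
\]
Its proof is the same short argument as for Lemma~\ref{lem:nes-suf}, with Corollary~\ref{col:Z} in place of Corollary~\ref{col:W}: the right side is the value of probing the largest-belief channel and then acting myopically; the left side is the value of probing $\omega_i$ and then acting myopically, because after probing $\omega_i$ the unprobed channels are updated by $\tau$ (which is order-reversing since $p_{11}<p_{01}$) and the probed channel becomes $p_{11}$ (the smallest possible value) on outcome ``1'' and $p_{01}$ (the largest) on outcome ``0'', so on either branch the sorted updated vector is exactly the argument that one unrolling of the recursion~\eqref{eq:Z} feeds into $Z_{t+1}$, and the deeper recursion then reproduces the myopic reward-to-go via Corollary~\ref{col:Z}. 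Since the myopic rule is optimal from $t+1$ on, comparing these two ``probe-then-myopic'' quantities is exactly the displayed inequality $V_t(\bar\omega;a=j)-V_t(\bar\omega;a=i)\ge 0$ for $\omega_j\ge\omega_i$ asserted in the theorem.

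It then remains to establish the displayed $Z_t$ inequality, and for this I would use a bubble-sort argument identical to the one in the proof of Theorem~\ref{mainThm}. In the vector $(\omega_1,\ldots,\omega_{i-1},\omega_{i+1},\ldots,\omega_n,\omega_i)$ the last entry $\omega_i$ is no larger than the entry $\omega_n$ immediately to its left, so Lemma~\ref{lem:n=3} in the border case $j=n-2$ shows that swapping them does not decrease $Z_t$; then repeatedly invoking Lemma~\ref{lem:n=3} with indices $j=n-3,n-4,\ldots,i-1$ (the interior cases) slides $\omega_i$ leftward past $\omega_{n-1},\ldots,\omega_{i+1}$, at each step not decreasing $Z_t$, until the vector is sorted back to $(\omega_1,\ldots,\omega_n)$. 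Chaining the inequalities yields the desired bound and closes the induction. Because Lemma~\ref{lem:n=3} is valid precisely under the hypothesis ``$n=3$ or $\beta\le\frac{1}{2}$'', the theorem holds in exactly those two regimes (with arbitrary $n$ in the second).

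As for where the work really lies: Theorem~\ref{thm:n=3} itself is a routine re-assembly of ingredients already in place, and the only step needing genuine care is the identification of the left-hand side of the necessary-and-sufficient condition above, which leans on the order-reversing nature of $\tau$ and the alternating-circular structure of the myopic policy described in Section~\ref{sec:policy-myopic}. The hard analysis --- and the entire source of the restriction to $n=3$ or $\beta\le\frac{1}{2}$ --- has already been absorbed into Lemma~\ref{lem:n=3}, specifically its $j=n-2$ case, where one uses the bound $\frac{\beta}{1-\beta}\le 1$ when $\beta\le\frac{1}{2}$ or the collapse to a single-variable comparison when $n=3$; this plays the role that the coupling-based Lemma~\ref{lem:bound} played in the case $p_{11}\ge p_{01}$.
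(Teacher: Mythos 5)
Your proposal is correct and follows essentially the same route as the paper: backward induction on $t$, the $Z_t$-analogue of Lemma~\ref{lem:nes-suf} justified via Corollary~\ref{col:Z} and the order-reversing $\tau$, and then the adjacent-swap (bubble-sort) chaining of Lemma~\ref{lem:n=3} (border case $j=n-2$ first, then the interior cases) to reduce the permuted vector to the sorted one. The paper states this more tersely ("a direct consequence of Lemma~\ref{lem:n=3}"), but the content is the same.
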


\begin{proof}
We prove by induction on $t$.  The optimality of the myopic policy at time
$t=T$ is obvious.  Now assume that the myopic policy is optimal for all
times $t+1, t+2, \cdots, T-1$, and we want to show that it is also optimal
at time $t$.  Suppose at time $t$ the channel probabilities are such that $\omega_n\geq \omega_i$ for $i=1, \cdots, n-1$. The myopic policy is optimal at time $t$ if and only if
probing $\omega_n$ followed by myopic probing is better than probing any
other channel followed by myopic probing. Mathematically, this means
\begin{eqnarray}
Z_t(\omega_1 , \hdots , \omega_{i-1} , \omega_{i+1} , \hdots , \omega_n , \omega_i) \leq
Z_t(\omega_1, \hdots , \omega_n), \hspace*{.4in}
\mbox{ for all }  \omega_1 \leq \omega_i \leq \omega_n. \nn
\end{eqnarray}
But this is a direct consequence of Lemma~\ref{lem:n=3},
completing the proof.
\end{proof}

\subsection{A $4$-channel Counter Example}


The following example shows that the myopic policy is not, in general,
optimal for $n\geq 4$ when $p_{11}<p_{01}$.

\begin{ex}
Consider an example with the following parameters:
$p_{01} = 0.9, p_{11} = 0.1, \beta=1$, and $\bar\omega = [.97, .97, .98, .99]$.
Now compare the following two policies at time $T-3$: play myopically (I), or play the $.98$ channel first, followed by the myopic policy (II).  Computation reveals that
\begin{eqnarray*}
&& V_{T-3}^{I} (.97, .97, .98, .99) = 2.401863 \\
&<& V_{T-3}^{II} (.97, .97, .98, .99) = 2.402968
\end{eqnarray*}
which shows that the myopic policy is not optimal in this case.
\end{ex}

It remains an interesting question as to whether such counter examples exist in the case when the initial condition is such that all channel are in the good state with the stationary probability.


\section{Infinite Horizon}\label{sec:infinite}

Now we consider extensions of results in Sections~\ref{sec:optimal}
and \ref{sec:counter}
to (P2) and (P3), i.e.,
to show that the myopic policy is also optimal for (P2) and (P3) under
the same conditions.
Intuitively, this holds due to the fact that the  stationary optimal policy of the finite horizon problem
is independent of the horizon as well as the discount factor. Theorems~\ref{thm:P2} and \ref{thm:P3}
below concretely establish this.

We point out that the proofs of Theorems~\ref{thm:P2} and \ref{thm:P3} do not rely on any additional
assumptions other than the optimality of the myopic
policy for (P1).
Indeed, if the optimality of the myopic policy for (P1)
can be established under weaker conditions,
Theorems~3 and 4 can be readily invoked to
establish its optimality under the same weaker condition for
(P2) and (P3), respectively.



\begin{theorem}
If myopic policy is optimal
for (P1), it is also optimal for (P2) for $0 \leq \beta <1$.  Furthermore, its value function is
the limiting  value function of (P1) as the time horizon goes to
infinity, i.e., we have $\max_{\pi} J_{\beta}^{\pi}(\bar\omega) =
\lim_{T\rightarrow\infty}\max_{\pi} J_{T}^{\pi}(\bar\omega)$.
\label{thm:P2}
\end{theorem}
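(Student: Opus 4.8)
The plan is to establish the identity $\max_\pi J_\beta^\pi(\bar\omega) = \lim_{T\to\infty} \max_\pi J_T^\pi(\bar\omega)$ first, and then deduce optimality of the myopic policy for (P2) from its optimality for each finite-horizon problem. For the limit identity, I would work with the DP operators. Define the operator $(\mathcal{L}V)(\bar\omega) := \max_{a} \big(\omega_a + \beta\omega_a V(\mathcal{T}(\bar\omega,a|1)) + \beta(1-\omega_a) V(\mathcal{T}(\bar\omega,a|0))\big)$, so that $V_\beta$ is its unique fixed point and the finite-horizon value functions satisfy $V_1 = \mathcal{L}^{T-1}(V_T)$ with $V_T(\bar\omega) = \max_a \omega_a$. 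Since rewards lie in $[0,1]$, $\mathcal{L}$ is a $\beta$-contraction in the sup norm on bounded functions, so $\mathcal{L}^{T-1}(V_T) \to V_\beta$ uniformly as $T\to\infty$; concretely $\|V_1 - V_\beta\|_\infty \le \beta^{T-1}\|V_T - V_\beta\|_\infty \le \frac{\beta^{T-1}}{1-\beta}$. Noting $V_1 = \max_\pi J_T^\pi$ and $V_\beta = \max_\pi J_\beta^\pi$ gives the claimed limit (in fact uniformly in $\bar\omega$).

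Next I would transfer optimality of the myopic policy. Let $\hat\pi$ denote the (stationary) myopic policy. By hypothesis $\hat\pi$ is optimal for (P1) for every horizon $T$, i.e., $J_T^{\hat\pi}(\bar\omega) = \max_\pi J_T^\pi(\bar\omega)$ for all $T$ and all $\bar\omega$. Because $\hat\pi$ is stationary and rewards are nonnegative and bounded, dominated/monotone convergence gives $J_\beta^{\hat\pi}(\bar\omega) = \lim_{T\to\infty} J_T^{\hat\pi}(\bar\omega)$: the horizon-$T$ discounted reward under a fixed stationary policy is just the truncation of the infinite discounted sum, and the tail $\sum_{t>T}\beta^{t-1}R(\cdot)$ is bounded by $\frac{\beta^T}{1-\beta}\to 0$. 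Combining, $J_\beta^{\hat\pi}(\bar\omega) = \lim_T J_T^{\hat\pi}(\bar\omega) = \lim_T \max_\pi J_T^\pi(\bar\omega) = \max_\pi J_\beta^\pi(\bar\omega)$, which is exactly the statement that $\hat\pi$ is optimal for (P2).

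Alternatively — and perhaps more cleanly — one can verify directly that the myopic policy's value function satisfies the optimality equation \eqref{DP-discount}. Let $U(\bar\omega) := J_\beta^{\hat\pi}(\bar\omega)$. One shows $U$ solves $U = \mathcal{L}U$: the one-step decomposition of $J_\beta^{\hat\pi}$ shows $U(\bar\omega) = \omega_{a^*} + \beta\omega_{a^*}U(\mathcal{T}(\bar\omega,a^*|1)) + \beta(1-\omega_{a^*})U(\mathcal{T}(\bar\omega,a^*|0))$ where $a^* = a^*(\bar\omega)$ is the myopic action, and the finite-horizon optimality of $\hat\pi$ (passed to the limit as above) shows this value is $\ge$ the bracket for any other action $a$, hence equals the max. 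By uniqueness of the fixed point, $U = V_\beta$, so $\hat\pi$ attains the max in \eqref{DP-discount} and is optimal for (P2).

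The main obstacle, such as it is, is purely a matter of care rather than depth: one must make sure the interchange of limit and maximization is justified (it is, via the uniform contraction bound, which is why I would route the argument through $\|\cdot\|_\infty$ estimates rather than pointwise limits), and one must confirm that the myopic policy being ``optimal for (P1)'' is used in the strong per-horizon, per-state form in which Theorem~\ref{mainThm} delivers it. No new structural facts about the channel model are needed; everything rests on the $\beta$-contraction property of the dynamic programming operator together with boundedness of the stage reward.
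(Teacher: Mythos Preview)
Your proposal is correct, and your \emph{alternative} route---show that $J_\beta^{\hat\pi}$ satisfies the fixed-point equation \eqref{DP-discount} by passing the finite-horizon optimality to the limit, then invoke uniqueness---is essentially what the paper does. The paper writes the finite-horizon DP identity for the myopic policy (from Theorem~\ref{mainThm}), takes $T\to\infty$ on both sides, and recognizes the result as \eqref{DP-discount}.

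Where you differ is in how you establish the limit identity $\max_\pi J_\beta^\pi=\lim_{T\to\infty}\max_\pi J_T^\pi$. The paper argues pointwise: by bounded convergence, $J_\beta^\pi=\lim_T J_T^\pi$ for each fixed stationary policy, and then asserts that maximization and limit may be interchanged ``due to the finiteness of the action space.'' You instead work with the Bellman operator $\mathcal L$ and use the $\beta$-contraction in sup norm to get $\|\max_\pi J_T^\pi - V_\beta\|_\infty\le \beta^{T-1}/(1-\beta)\to 0$, which delivers the identity uniformly and without any separate interchange argument. Your route is a bit more self-contained on this point; the paper's route is shorter once one is willing to accept the interchange step (which, strictly speaking, is justified here because the maximizer over all $T$ is the \emph{same} stationary policy $\hat\pi$, not merely because the per-stage action set is finite). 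Your direct chain $J_\beta^{\hat\pi}=\lim_T J_T^{\hat\pi}=\lim_T\max_\pi J_T^\pi=\max_\pi J_\beta^\pi$ is also a valid and slightly more streamlined way to close the argument than the paper's uniqueness-of-fixed-point step.
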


\begin{proof}
We first use the bounded convergence theorem (BCT) to establish the fact that
under any deterministic stationary Markov policy $\pi$, we have
$J_{\beta}^{\pi}(\bar\omega) = \lim_{T\rightarrow\infty} J_{T}^{\pi}(\bar\omega)$.
We prove this by noting that
\begin{eqnarray}
J_{\beta}^{\pi}(\bar\omega) &=& E^{\pi} [ \lim_{T\rightarrow\infty}
\sum_{t=1}^{T} \beta^{t-1} R_{\pi(t)}(\bar\omega(t)) | \bar\omega(1) = \bar\omega]
\nonumber\\
&=& \lim_{T\rightarrow\infty} E^{\pi}[ \sum_{t=1}^T \beta^{t-1} R_{\pi(t)}(\bar\omega(t)) | \bar\omega(1) = \bar\omega] \nonumber\\
&=& \lim_{T\rightarrow\infty} J_{T}^{\pi}(\bar\omega)
\end{eqnarray}
where the second equality is due to BCT for
 $\sum_{t=1}^{T} \beta^{t-1} R_{\pi(t)}(\bar\omega(t)) \leq \frac{1}{1-\beta}$.
This proves the second part of the theorem by noting that due to the finiteness
of the action space, we can interchange maximization and limit.

Let $\pi^*$ denote the myopic policy.
We now establish the optimality of $\pi^{*}$ for (P2).
%
%
From Theorem~1, we know:
\begin{eqnarray}
J^{\pi^*}_T (\bar\omega) &=& \max_{a=i} \left\{ \omega_i +  \beta
\omega_i J^{\pi^*}_{T-1} \left( \Tc\left( {\bar\omega}, i | 1\right)\right) \right. \nn \\
&& \hspace*{.75in} + \beta \left. 
(1-\omega_i)  J^{\pi^*}_{T-1} \left( \Tc\left( {\bar\omega}, i | 0 \right) \right) \right\}. \nn
\end{eqnarray}
Taking limit of both sides, we have
\begin{eqnarray}\label{DP-myopic}
J^{\pi^*}_\beta (\bar\omega) && = \max_{a=i} \left\{ \omega_i +  \beta
\omega_i J^{\pi^*}_{\beta} \left( \Tc\left( {\bar\omega}, i | 1\right)\right) \right. \nn \\
&& \hspace*{.75in} + \beta\left.
(1-\omega_i)  J^{\pi^*}_{\beta} \left( \Tc\left( {\bar\omega}, i | 0\right) \right) \right\}.
\end{eqnarray}
Note that (\ref{DP-myopic}) is nothing but the dynamic programming equation for
the infinite horizon discounted reward problem given in \eqref{DP-discount}.
From the uniqueness of the dynamic programming solution, then, we have
\[
 J^{\pi^*}_\beta(\bar\omega) = V_\beta (\bar\omega)  = \max_{\pi} J_{\beta}^{\pi}({\bar\omega})
\]
hence, the optimality of the myopic policy.


\end{proof}

\begin{theorem}
Consider (P3) with the expected average reward and under the ergodicity assumption
$| p_{11}-p_{00} | < 1$.  Myopic
policy is optimal for problem (P3) if it is optimal for (P1).
\label{thm:P3}
\end{theorem}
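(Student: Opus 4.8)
The plan is to verify that the average-reward optimality equation (\ref{DP-avgCost}) admits a solution $(J, h_\infty)$ with $h_\infty$ bounded, and that the myopic policy attains the maximum on the right-hand side. The standard route for problems with a countable state space and bounded rewards is the \emph{vanishing discount approach}: take $\beta \uparrow 1$ in the discounted problem (P2), for which Theorem~\ref{thm:P2} already gives us optimality of the myopic policy together with the fixed-point equation (\ref{DP-discount}). First I would fix a reference state, say $\bar\omega_0$ (a natural choice is the all-stationary vector with $\omega_i = p_{10}/(p_{01}+p_{10})$), and define the \emph{relative value function} $h_\beta(\bar\omega) := V_\beta(\bar\omega) - V_\beta(\bar\omega_0)$. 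Since $0 \le V_\beta(\bar\omega) \le 1/(1-\beta)$, the quantity $(1-\beta)V_\beta(\bar\omega_0)$ is bounded and, along a suitable subsequence $\beta_k \uparrow 1$, converges to some constant $J$.

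The crux of the argument is then a uniform bound on $h_\beta$, i.e., showing $\sup_\beta \sup_{\bar\omega} |h_\beta(\bar\omega)| < \infty$. This is where the ergodicity assumption $|p_{11}-p_{00}|<1$ enters. The idea is the usual one: $h_\beta(\bar\omega)$ can be interpreted (up to error terms controlled by the discount) as the difference in expected total reward between starting the myopic process at $\bar\omega$ versus at $\bar\omega_0$, and because $|p_{11}-p_{00}|<1$ the operator $\tau$ is a contraction with $|\tau(\omega)-\tau(\omega')| = |p_{11}-p_{01}|\,|\omega-\omega'| \le |\omega-\omega'|$ and, more to the point, the channels mix geometrically toward the stationary probability; hence two copies of the myopic chain started from different belief vectors can be coupled so that their running reward streams differ by a uniformly bounded amount. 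Concretely I would bound $|h_\beta(\bar\omega)|$ by the expected reward accumulated before a coupling time of the two myopic sample paths (one started from $\bar\omega$, one from $\bar\omega_0$), and argue this coupling time has a geometric tail whose parameter depends only on $p_{11}, p_{00}$ and not on $\beta$; summing a geometric series over rewards in $[0,1]$ gives the uniform bound. (Alternatively one may invoke a general theorem such as \cite[Theorems 6.1--6.3]{marcus-survey} or the conditions in \cite{puterman} once the uniform boundedness is in hand.)

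With $\{h_{\beta_k}\}$ uniformly bounded and the state space countable, a diagonal extraction gives a further subsequence along which $h_{\beta_k}(\bar\omega) \to h_\infty(\bar\omega)$ pointwise for every $\bar\omega$, with $h_\infty$ bounded. Rewriting (\ref{DP-discount}) as
\begin{eqnarray*}
(1-\beta_k)V_{\beta_k}(\bar\omega_0) + h_{\beta_k}(\bar\omega)
= \max_{a=1,\cdots,n}\Big( \omega_a + \beta_k\,\omega_a\, h_{\beta_k}\big(\Tc(\bar\omega,a\,|\,1)\big)
+ \beta_k(1-\omega_a)\, h_{\beta_k}\big(\Tc(\bar\omega,a\,|\,0)\big)\Big),
\end{eqnarray*}
and passing to the limit along the subsequence (the maximum is over a finite set, so the limit passes through it), we obtain exactly (\ref{DP-avgCost}) with the pair $(J,h_\infty)$. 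Moreover, since the myopic action $a=\pi^*(\bar\omega)=\arg\max_a \omega_a$ attains the maximum in (\ref{DP-discount}) for every $\beta$ by Theorem~\ref{thm:P2}, and the maximizing set is stable under the limit (the objective is continuous and the myopic action index does not depend on $\beta$), $\pi^*$ attains the maximum in (\ref{DP-avgCost}) as well. By the verification argument for average-reward MDPs cited after (\ref{DP-avgCost}) (\cite[Theorems 6.1--6.3]{marcus-survey}), $(J,h_\infty)$ solving the optimality equation with bounded $h_\infty$ implies $J = \max_\pi J_\infty^\pi(\bar\omega)$ and that any stationary policy attaining the maximum — in particular $\pi^*$ — is average-reward optimal (indeed in the strong sense described after the statement of (P3)). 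The main obstacle is precisely the uniform boundedness of $h_\beta$: everything else is a routine vanishing-discount limit, but that estimate is exactly where the coupling/geometric-mixing consequence of $|p_{11}-p_{00}|<1$ must be made to work uniformly in $\beta$.
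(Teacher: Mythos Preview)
Your proposal is correct and follows essentially the same vanishing-discount route as the paper: center the discounted value function at a reference state (the paper also chooses the stationary belief $\bar\omega^*$), extract a subsequence $\beta_k\uparrow 1$ along which $(1-\beta_k)V_{\beta_k}(\bar\omega^*)\to J^*$ and $h_{\beta_k}\to h_\infty$, pass to the limit in the discounted DP equation to obtain (\ref{DP-avgCost}), and invoke the verification results \cite[Theorems~6.1--6.3]{marcus-survey}. The paper likewise uses Theorem~\ref{thm:P2} to identify the myopic action as the maximizer for every $\beta$, so that it survives the limit.

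The only substantive difference is how the uniform bound on $h_\beta$ is obtained. Where you sketch a coupling of two myopic belief processes and appeal to a geometric-tail coupling time, the paper takes a shorter analytic route: it invokes Lemma~2 of \cite{icc-08}, which directly bounds value-function differences and yields the explicit constant $K=1/(1-|p_{11}-p_{01}|)$ under the ergodicity hypothesis. Your coupling idea is plausible, but be aware that the myopic action sequence depends on the initial ordering of the belief vector, so two processes started from $\bar\omega$ and $\bar\omega_0$ need not probe the same channels, and a naive sample-path coupling is not immediate; making the geometric-tail claim rigorous would take more work than the paper's one-line citation. Either device delivers the needed uniform bound, after which the two arguments coincide.
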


\begin{proof}
We consider the infinite horizon discounted cost for $\beta <1$ under the optimal policy denoted by $\pi^*$:
\begin{eqnarray} \label{starting}
\lefteqn{J^{\pi^*}_\beta (\bar\omega) = \max_{a=i} \left\{ \omega_i + \beta
\omega_i J^{\pi^*}_\beta \left( \Tc\left( {\bar\omega}, i | 1\right)\right) \right. } \nn \\
&& \hspace*{.7in} \left. + \beta
(1-\omega_i)  J^{\pi^*}_\beta \left( \Tc\left( {\bar\omega}, i | 0\right) \right) \right\}.
\end{eqnarray}
This can be written as
\begin{eqnarray}
\lefteqn{(1 - \beta) J^{\pi^*}_\beta (\bar\omega) } \nn \\
&  = &   \max_{a=i} \left\{ \omega_i + \beta
\omega_i \left[ J^{\pi^*}_\beta \left( \Tc\left( {\bar\omega}, i | 1\right)\right) - J^{\pi^*}_\beta (\bar\omega) \right] \right. \nn \\
& & \left. + \beta
(1-\omega_i)  \left[ J^{\pi^*}_\beta \left( \Tc\left( {\bar\omega}, i | 0\right) \right)
- J^{\pi^*}_\beta (\bar\omega) \right] \right\}. \nn
\end{eqnarray}

Notice that the boundedness of the reward function and compactness of information state
 implies that the sequence of $\{(1 - \beta) J^{\pi^*}_\beta (\bar\omega)\}$
 is bounded, \color{blue} i.e. for all $0\leq \beta \leq 1$,
 \begin{equation}\label{bdd-seq}
(1 - \beta) J^{\pi^*}_\beta (\bar\omega) \leq 1. 
 \end{equation}
 Also, applying Lemma 2 from \cite{icc-08} (which provides an upper bound on 
the difference in value functions between taking two different actions followed
by the optimal policy) and noting that 
$-1 < p_{11}-p_{00} < 1$, we have that there
exists some positive constant $K := \frac{1}{1 - |p_{11}-p_{01}|}$ such that
\begin{equation}\label{bdd-seq2}
\left | J^{\pi^*}_{\beta} \left( \Tc\left( {\bar\omega}, i | 0\right) \right)
- J^{\pi^*}_\beta (\bar\omega) \right | \leq K.
\end{equation}

\color{black} 
By Bolzano-Weierstrass theorem, (\ref{bdd-seq}) and (\ref{bdd-seq2}) guarantee the
existence of a converging sequence $\beta_k \rightarrow 1$ such that
 \begin{eqnarray}
&&  \color{blue} \lim_{k \rightarrow \infty} (1 - \beta_k) J^{\pi^*}_{\beta_k} (\bar\omega^*)  :=J^*,  \label{proc1}\\
\mbox{and} && \color{blue}
\lim_{k \rightarrow \infty} \left[J^{\pi^*}_{\beta_k} (\bar\omega) -
J^{\pi^*}_{\beta_k} (\bar\omega^*) \right] := h^{\pi^*}(\bar\omega) ~, \label{h}
 \end{eqnarray}
where $\omega^*_i: = \frac{p_{01}}{1- p_{11}+ p_{01}}$ is the steady-state belief (the limiting belief when channel $i$ is not sensed for a long time).

As a result, (\ref{proc1}) can be written as
\begin{eqnarray}
J^*  
 = \lim_{k \rightarrow \infty}  \left\{ (1 - \beta_k) J^{\pi^*}_{\beta_k} (\bar\omega^*)
+  (1 - \beta_k) \left[J^{\pi^*}_{\beta_k} (\bar\omega) -
J^{\pi^*}_{\beta_k} (\bar\omega^*)\right] \right\}.\nn
\end{eqnarray}

In other words,
\begin{eqnarray}
\lefteqn{J^* 
=  \lim_{k \rightarrow \infty}  \max_{a=i} \left\{ \omega_i + \beta_k
\omega_i \left[ J^{\pi^*}_{\beta_k} \left( \Tc\left( {\bar\omega}, i | 1\right)\right) \right. \right.} \nn \\
&  & \left. - J^{\pi^*}_{\beta_k} (\bar\omega) \right]
 + \beta_k \left. 
(1-\omega_i)  \left[ J^{\pi^*}_{\beta_k} \left( \Tc\left( {\bar\omega}, i | 0\right) \right)
- J^{\pi^*}_{\beta_k} (\bar\omega) \right] \right\}. \nn 
\end{eqnarray}
From (\ref{h}), we can write this as
\begin{eqnarray}  \label{optimality}
\lefteqn{J^* + h^{\pi^*} (\bar\omega) =  \max_{a=i} \left\{ \omega_i +
\omega_i h^{\pi^*} \left( \Tc\left({\bar\omega}, i | 1\right)\right) + \right. }\nn \\
& &  \left.
 \hspace*{.9in} (1-\omega_i)  h^{\pi^*} \left( \Tc\left({\bar\omega}, i | 0\right) \right) \right\}.
\end{eqnarray}

Note that  (\ref{optimality}) is nothing but the DP equation as given
by (\ref{DP-avgCost}). In addition, we know that the immediate reward as well 
as function $h$ are both bounded by $\max(1,K)$. 
This implies that $J^*$ is the maximum average reward, i.e.\
$J^* = \max_{\pi} J_{\infty}^{\pi}({\bar\omega(t)})$ {\color{blue}{(see \cite[Theorems 6.1-6.3]{marcus-survey})}}.

On the other hand, we know from Theorem \ref{thm:P2} that the myopic policy is optimal for (P2) if it is for (P1), and thus we can take $\pi^*$ in (\ref{starting}) to be the myopic policy.  Rewriting (\ref{starting}) gives the following: 
\begin{eqnarray*}
\lefteqn{J^{\pi^*}_\beta (\bar\omega) = \omega_{\pi^*(\bar\omega)} + \beta
\omega_{\pi^*(\bar\omega)} J^{\pi^*}_\beta \left( \Tc\left( {\bar\omega}, {\pi^*(\bar\omega)} | 1\right)\right) } \nn \\
&&  \hspace*{.5in}  + \beta
(1-\omega_{\pi^*(\bar\omega)})  J^{\pi^*}_\beta \left( \Tc\left( {\bar\omega}, {\pi^*(\bar\omega)} | 0\right) \right) ~. 
\end{eqnarray*}
Repeating steps (\ref{proc1})-(\ref{optimality}) we arrive at the following:
\begin{eqnarray}
\lefteqn{J + h^{\pi^*} (\bar\omega) =  \omega_{\pi^*(\bar\omega)} +
\omega_{\pi^*(\bar\omega)} h^{\pi^*} \left( \Tc\left(\bar\omega, {\pi^*(\bar\omega)} | 1\right)\right) + }
\nn \\
& &  \hspace*{.7in}
(1-\omega_{\pi^*(\bar\omega)})  h^{\pi^*} \left( \Tc\left(\bar\omega, {\pi^*(\bar\omega)} | 0\right) \right),
\end{eqnarray}
which {\color{blue}{shows that 
$(J^*, h^{\pi^*}, \pi^*)$ is a canonical triplet \cite[Theorems 6.2]{marcus-survey}. This, 
together with boundedness of $h^{\pi^*}$ and immediate reward,}} implies that
the myopic policy $\pi^*$ is optimal for (P3) {\color{blue}{\cite[Theorems 6.3]{marcus-survey}}}. 
\end{proof}

\section{Discussion and Related Work} \label{sec:discussion}



\tcb{The problem studied in this paper may be viewed as a special case of a class of MDPs known as the {\em restless bandit problems} \cite{whittle}.  In this class of problems, $N$ controlled Markov chains (also called {\em projects} or {\em machines}) are activated (or played) one at a time.  A machine when activated generates a state dependent reward and transits to the next state according to a Markov rule.   A machine not activated transits to the next state according to a (potentially different) Markov rule.  The problem is to decide the sequence in which these machines are activated so as to maximize the expected (discounted or average) reward over an infinite horizon.  
To put our problem in this context, each channel corresponds to a machine, and a channel is activated when it is probed, and its information state goes through a transition depending on the observation and the underlying channel model.  When a channel is not probed, its information state goes through a transition solely based on the underlying channel model \footnote{
The standard definition of bandit problems typically assumes finite or countably infinite state spaces. While our problem can potentially have an uncountable state space, it is nevertheless countable for a given initial state.  This view has been taken throughout the paper.}. 
}

In the case that a machine stays frozen in its current state when not played, the problem reduces to the {\em multi-armed bandit problem}, a class of problems solved by Gittins in his 1970 seminal work \cite{gittins}.  Gittins showed that there exists an {\em index} associated with each machine that is solely a function of that individual machine and its state, and that playing the machine currently with the highest index is optimal.  This index has since been referred to as the {\em Gittins index} due to Whittle \cite{whittle-80}.  
The remarkable nature of this result lies in the fact that it essentially decomposes the $N$-dimensional problem into $N$ 1-dimensional problems, as an index is defined for a machine independent of others. 
The basic model of multi-armed bandit has been used previously in the context of channel access and cognitive radio networks. For example, in \cite{motamedi-07}, Bayesian learning was used to estimate the probability of a channel being available, and
the Gittins indices, calculated based on such estimates (which were only updated when a channel is observed and used, thus giving rise to a multi-armed bandit formulation rather than a restless bandit formulation), were used for channel selection.

On the other hand, relatively little is known about the structure of the optimal policies for the restless bandit problems in general.  
It has been shown that the Gittins index policy is not in general optimal in this case \cite{whittle}, and that this class of problems is PSPACE-hard in general \cite{tsitsiklis}.
Whittle, in \cite{whittle}, proposed a Gittins-like index (referred to as the Whittle's index policy), 
shown to be optimal under a constraint on the {\em average} number of machines that can be played at a given time, 
and asymptotically optimal under certain limiting regimes \cite{weber}.
%
There has been a large volume of literature in this area, including various approximation algorithms, see for example \cite{bertsimas-index} and \cite{nino-mora} for near-optimal heuristics, as well as conditions for certain policies to be optimal for
special cases of the restless bandit problem, see e.g., \cite{lott-00,ehsan-twc}.
\tcb{
The nature of the results derived in the present paper is similar to that of \cite{lott-00,ehsan-twc} in spirit.  That is, we have shown that for this special case of the restless bandit problem an index policy is optimal under certain conditions. 
For the indexability (as defined by Whittle \cite{whittle}) of this problem, see 
\cite{Liu&Zhao:08SDR}.} 

Recently Guha and Munagala \cite{guha,guha2} studied a class of problems referred to as the {\em feedback multi-armed bandit} problems.  This class is very similar to the restless bandit problem studied in the present paper, with the difference that channels may have different transition probabilities (thus this is a slight generalization to the one studied here). 
%
While we identified conditions under which a simple greedy index policy is optimal in the present paper, 
Guha and Munagala in \cite{guha,guha2} looked for provably good approximation algorithms.
In particular, they derived a $2+\epsilon$-approximate policy using a duality-based technique. 

\section{Conclusion} \label{sec:conclusion}

The general problem of opportunistic sensing and access arises in
many multi-channel communication contexts. For cases where the stochastic evolution  of channels can be modelled as i.i.d. two-state Markov chains, we showed that a simple and robust myopic policy is optimal for \tcb{the finite and infinite horizon discounted reward criteria as well as the infinite horizon average reward criterion, when the state transitions are positively correlated over time.  When the state transitions are negatively correlated, we showed that the same policy is optimal when the number of channels is limited to 2 or 3, and presented a counterexample for the case of 4 channels.} 

%



\bibliography{osa-IT}

\end{document}